\documentclass[aps,twocolumn,prl,nofootinbib,longbibliography,superscriptaddress]{revtex4-2}

\usepackage{amssymb,amsfonts}
\usepackage{amsmath}
\usepackage{mathrsfs}
\usepackage{tikz}
\usepackage{amsthm}
\usetikzlibrary{decorations.pathreplacing}

\usepackage[colorlinks, allcolors=blue!70!black, linktocpage]{hyperref}
\usepackage{accents}
\usepackage{graphicx}
\usepackage[utf8]{inputenc}
\usepackage{float}

\usepackage{titlesec}
\usepackage{cases}
\usepackage{mathtools}
\usepackage{color}
\usepackage{cleveref}
\usepackage{braket}
\usepackage{comment}
\usepackage{extarrows}
\usepackage{booktabs}

\definecolor{Gray}{gray}{0.95}
\definecolor{LightCyan}{rgb}{0.88,1,1}

\def\Tr{{\text{Tr}}}

\newcommand{\pbn}{p_{\mathrm{bn}}}

\crefname{lem}{lemma}{lemmas}
\crefname{thm}{theorem}{theorems}
\crefname{cor}{corollary}{corollaries}
\crefname{rem}{remark}{remarks}
\crefname{prop}{proposition}{propositions}
\theoremstyle{plain}
\newtheorem{theorem}{Theorem}
\newtheorem{proposition}{Proposition}
\newtheorem{lemma}{Lemma}
\newtheorem{corollary}{Corollary}
\theoremstyle{definition}
\newtheorem{remark}{Remark}
\theoremstyle{plain}
\newcommand{\cyc}{\operatorname{cyc}}
\newcommand{\dA}{d_A}
\newcommand{\dB}{d_B}
\newcommand{\dC}{d_C}
\newcommand{\rhoTB}{\rho_{AB}^{T_B}}
\newcommand{\bineg}{{|\rho_{AB}^{T_B}|^{T_B}}}
\newcommand{\Chalf}{C_{1/2}}
\newcommand{\ZZ}{\mathbb{Z}}
\newcommand{\Cat}{C}

\newcommand{\specplot}[7]{%
  \begin{scope}[shift={(#1,#2)}]
    \node[font=\scriptsize, anchor=south] at (0.85,1.06) {#3};
    \draw[->,gray!75,line width=0.4pt] (-0.05,0) -- (1.72,0)
        node[anchor=west,font=\scriptsize]{$#4$};
    \draw[->,gray!75,line width=0.4pt] (#5,-0.05) -- (#5,1.0);
    \filldraw[draw=blue!55!black,fill=blue!15,line width=0.7pt]
        plot[smooth,tension=0.55] coordinates {#6} -- cycle;
    \draw[blue!55!black,line width=1.1pt] (#5,0) -- (#5,#7); 
  \end{scope}}

\begin{document}
\count\footins = 1000

\title{Logarithmic negativity typically equals exact entanglement cost}

\author{Bowen Ouyang}
\email{bowenouyang@brandeis.edu}
\affiliation{SJTU Paris Elite Institute of Technology, Shanghai Jiao Tong University,
Shanghai 200240, China}
\affiliation{Martin Fisher School of Physics, Brandeis University, Waltham, Massachusetts 02453, USA}
\author{Jonah Kudler-Flam}
\email{jkudlerflam@ias.edu}
\affiliation{School of Natural Sciences, Institute for Advanced Study, Princeton, NJ 08540, USA}

\begin{abstract}
Quantum entanglement plays a leading role in modern understanding of physical systems, from quantum phases of matter to quantum gravity. In quantum information theory, one seeks operationally meaningful quantifiers of entanglement, which for large quantum systems are notoriously difficult to evaluate due to the lack of computationally efficient algorithms. In this Letter, we show that for large random induced mixed states the logarithmic negativity, an efficiently computable entanglement measure, generically coincides with the exact entanglement cost under positive-partial-transpose-preserving operations, thereby acquiring a precise operational interpretation. Our results establish logarithmic negativity as an exact characterization of entanglement in generic many-body states and provide a tractable route for quantifying entanglement in complex quantum systems.
\end{abstract}

\maketitle

\textit{Introduction.}---
Entanglement is the defining resource of quantum physics, organizing our understanding of systems as varied as quantum phases of matter and quantum gravity. For pure states, entanglement entropy is characterized by a single quantity, the von Neumann entropy, which determines all asymptotic conversion rates under local operations and classical communication (LOCC)~\cite{BBPS1996}. For mixed states that arise ubiquitously, whenever a subsystem is considered or systems at finite temperature, the situation is far richer. A host of inequivalent measures exist, and the operationally meaningful ones are the most sought after. Chief among these are the distillable entanglement, the asymptotic rate at which Bell pairs can be extracted by LOCC, and the entanglement cost, the rate at which Bell pairs must be consumed to prepare the state \cite{BennettMixed1996,HaydenHorodeckiTerhal2001}. Computing these quantities is one of the central goals of quantum information theory. Unfortunately, both are defined by intractable asymptotic optimizations, rendering them essentially uncomputable for large, many-body quantum systems of physical interest.

This has motivated the search for entanglement measures that are simultaneously meaningful and efficient to evaluate. The most successful candidate is built from the Peres-Horodecki positive partial transpose (PPT) criterion \cite{Peres1996,HorodeckiSep1996}. A separable (unentangled) state remains positive under partial transposition, so any negative eigenvalue of the partial transpose witnesses entanglement. Logarithmic negativity, an entanglement monotone computed from a single matrix diagonalization, quantifies this entanglement \cite{ZHSL1998,VidalWerner2002,Plenio2005}. Its computability came at the price of a precise operational interpretation. In this Letter, we close this gap by showing that the negativity generically equals the exact PPT entanglement cost, and therefore acquires a sharp operational meaning for almost every state.\footnote{A related measure based on the PPT criterion is the $\kappa$-entanglement~\cite{WangWilde2020}. This was introduced because it is closely related to the exact PPT entanglement cost yet is efficiently computable via semi-definite programming. Our result also fixes the $\kappa$-entanglement, and the hierarchy of entanglement measures in \cite{LamiMeleRegula2025}, which satisfies $\mathcal{E} \leq \kappa \leq E_{ppt}$ for all states \cite{WangWilde2023,LamiMeleRegula2025}. Since $E_{ppt} = \mathcal{E}$ (exactly or to leading order) for typical states, the $\kappa$-entanglement is likewise generically equal to the logarithmic negativity, and hence to the exact entanglement cost.}

Given a density matrix $\rho_{AB}$ on a bipartite Hilbert space, $\mathcal{H}_A\otimes \mathcal{H}_B$ with dimension $d_Ad_B$, the logarithmic negativity (henceforth referred to as the negativity) is defined as
\begin{align}
    \mathcal{E}(\rho_{AB}) = \log \| \rho_{AB}^{T_B}\|_1,
\end{align}
where $(\cdot)^{T_B}$ is the partial transpose and $\| \cdot \|_1$ is the trace norm. Throughout, $\log$ denotes the natural logarithm.

The negativity bounds the distillable entanglement from above \cite{VidalWerner2002,Rains1999}. It also lower bounds the exact PPT entanglement cost, $E_{ppt}$, which is the number of Bell pairs that must be consumed to prepare $\rho_{AB}$ using PPT preserving operations. Indeed, $E_{ppt}$ has been proven to be bounded as \cite{APE2003}
\begin{align}
\begin{aligned}
      \mathcal{E}(\rho_{AB}) &\leq E_{ppt}(\rho_{AB}) \leq \log Z(\rho_{AB})
    \\
    Z(\rho_{AB})&\equiv \| \rho_{AB}^{T_B}\|_1  + d_Ad_B\max(0,-\xi_{\min}(|\rho_{AB}^{T_B}|^{T_B}))
    \label{eq:bounds}
\end{aligned}
\end{align}
where $\xi_{\min}(|\rho_{AB}^{T_B}|^{T_B})$ is the minimum eigenvalue of the binegativity operator. If the binegativity $|\rho_{AB}^{T_B}|^{T_B}$ is positive, as it is e.g.~for all two-qubit states \cite{Ishizaka2004}, we see that the PPT entanglement cost is exactly the negativity.

\begin{figure*}[t]
    \centering
    \begin{tikzpicture}[font=\small, scale=1.4, spinelabel/.style={font=\scriptsize},
        lead/.style={gray!55,line width=0.8pt}]
      \def\Rcurve{(2,0)(2.2,0.364)(2.4,0.667)(2.6,0.923)(2.8,1.143)(3.0,1.333)(3.2,1.5)(3.4,1.647)(3.6,1.778)(3.8,1.895)(4,2)}
      \def\Lcurve{(2,0)(1.8,0.364)(1.6,0.667)(1.4,0.923)(1.2,1.143)(1.0,1.333)(0.8,1.5)(0.6,1.647)(0.4,1.778)(0.2,1.895)(0,2)}
      \fill[black!8]  (0,2) rectangle (4,4);                  
      \fill[black!15] (0,0) rectangle (4,2);                  
      \fill[red!20] plot[smooth] coordinates \Rcurve -- (4,0) -- cycle;  
      \fill[red!20] plot[smooth] coordinates \Lcurve -- (0,0) -- cycle;  
      \draw[thick] (0,0) rectangle (4,4);
      \draw[thick] (0,2) -- (4,2);                            
      \draw[dashed,thick] plot[smooth] coordinates \Rcurve;   
      \draw[dashed,thick] plot[smooth] coordinates \Lcurve;
      \node[font=\small] at (2,3) {PPT};
      \node[font=\small] at (2,1.3) {ES};
      \node[font=\small] at (0.7,0.5) {ME};
      \node[font=\small] at (3.3,0.5) {ME};
      \draw (0,4)--(-0.08,4) node[left,spinelabel]{$0$};
      \draw (0,2)--(-0.08,2) node[left,spinelabel]{$\tfrac12$};
      \draw (0,0)--(-0.08,0) node[left,spinelabel]{$1$};
      \node[spinelabel,rotate=90,anchor=south] at (-0.6,2)
            {$\dfrac{\log d_Ad_B}{\log d_Ad_Bd_C}$};
      \draw (0,0)--(0,-0.08) node[below,spinelabel]{$0$};
      \draw (2,0)--(2,-0.08) node[below,spinelabel]{$\tfrac12$};
      \draw (4,0)--(4,-0.08) node[below,spinelabel]{$1$};
      \node[spinelabel,below] at (2,-0.5) {$\dfrac{\log d_A}{\log d_Ad_B}$};

      \specplot{-4.5}{3.4}{$p(\xi)$}{\xi}{0.18}
        {(0.30,0)(0.46,0.42)(0.66,0.70)(0.86,0.80)(1.06,0.70)(1.26,0.42)(1.42,0)}{0}
      \specplot{-2.55}{3.4}{$\pbn(\xi)$}{\xi}{0.18}
        {(0.30,0)(0.38,0.74)(0.50,0.60)(0.70,0.42)(0.96,0.28)(1.26,0.15)(1.50,0.04)(1.58,0)}{0}
      \specplot{-4.5}{-0.1}{$p(\xi)$}{\xi}{0.70}
        {(0.12,0)(0.22,0.18)(0.34,0.40)(0.44,0.60)(0.52,0.66)(0.58,0.42)(0.62,0)(0.66,0)(0.74,0)(0.78,0)(0.84,0.44)(0.90,0.66)(0.98,0.60)(1.10,0.40)(1.26,0.22)(1.42,0.08)(1.52,0)}{0.85}
      \specplot{-2.55}{-0.1}{$\pbn(\xi)$}{\xi}{0.22}
        {(0.36,0)(0.44,0.52)(0.52,0.72)(0.64,0.64)(0.82,0.44)(1.06,0.26)(1.33,0.11)(1.52,0)}{0.85}
      \specplot{4.4}{1.35}{$p(\xi)$}{\xi}{0.40}
        {(0.14,0)(0.34,0.36)(0.58,0.66)(0.88,0.80)(1.18,0.64)(1.42,0.34)(1.55,0)}{0}
      \specplot{6.35}{1.35}{$\pbn(\xi)$}{\xi}{0.34}
        {(0.18,0)(0.34,0.38)(0.60,0.68)(0.90,0.80)(1.20,0.64)(1.44,0.34)(1.56,0)}{0}

      \draw[lead] (-0.85,3.65) to[out=0,in=170] (.25,3.15);
      \fill[gray!55] (.25,3.15) circle (1.3pt);
      \draw[lead] (-0.85,0.45) to[out=0,in=185] (.25,0.6);
      \fill[gray!55] (.25,0.6) circle (1.3pt);
      \draw[lead] (4.35,1.9) to[out=180,in=25] (3.5,1.9);
      \fill[gray!55] (3.5,1.9) circle (1.3pt);
    \end{tikzpicture}
    \caption{Phase diagram of a random induced mixed state $\rho_{AB}$, parametrized by the relative subsystem sizes. The vertical axis $\log d_Ad_B/\log d_Ad_Bd_C$, which increases downward, measures how much of the global Hilbert space is retained in $AB$, and the horizontal axis $\log d_A/\log d_Ad_B$ measures the $A$/$B$ imbalance. The positive-partial-transpose (PPT), entanglement-saturation (ES), and maximally entangled (ME) phases are separated by the solid PPT/NPT line and the dashed ME boundaries. For each phase the left cartoon shows the spectral density $p(\xi)$ of the partial transpose $\rho_{AB}^{T_B}$ and the right cartoon the binegativity spectrum $\pbn(\xi)$ of $|\rho_{AB}^{T_B}|^{T_B}$. The binegativity is nonnegative except in the saturation phase, where a small tail crosses $\xi=0$.}
    \label{fig:phasediagram}
\end{figure*}

In this Letter, we demonstrate that the equivalence of negativity and PPT entanglement cost is generic. Namely, for random induced mixed states, $|\rho_{AB}^{T_B}|^{T_B}$ is either positive or close enough to positive such that the upper and lower bounds of \eqref{eq:bounds} collapse for large-dimensional Hilbert spaces. Equality is exact whenever the binegativity is positive semidefinite (the ME and PPT phases of figure \ref{fig:phasediagram}). In one phase (the ES phase of figure \ref{fig:phasediagram}), a small negative tail leaves a dimension-independent $O(1)$ gap $\log Z-\mathcal{E}$, asymptoting to $\approx0.22$ deep in the phase, so equality holds only to leading order.
This establishes the negativity as a precise operationally meaningful quantity for all but a measure zero portion of the Hilbert space, with respect to the Haar measure. In order to show this, we compute the full spectrum of the operator $|\rho_{AB}^{T_B}|^{T_B}$ and subsequently extract $\xi_{\min}$. This involves evaluating the moments  $\mathcal{Z}_{n,m}=\Tr\big(\big((\rho_{AB}^{T_B})^{n}\big)^{T_B}\big)^{m}$, which we prove are controlled by a two-parameter combinatorial sequence, the generalized Motzkin numbers \cite{sun2014congruences}. The generating function of these numbers encodes the spectrum of $|\rho_{AB}^{T_B}|^{T_B}$, and we verify the resulting spectrum using Monte-Carlo simulations.

\textit{Main result.}---
Our central claim rests, in the entanglement saturation phase, on hypotheses about the binegativity spectrum that we now isolate.
\begin{description}
  \item[(A) Replica continuation] The continuation $n\to1$ through even integers of the moments $\mathcal{Z}_{n,m}$ reproduces the binegativity moments $\langle\Tr(|\rho_{AB}^{T_B}|^{T_B})^m\rangle$.
  \item[(B) Edge convergence] The minimum eigenvalue converges to the lower edge of the binegativity spectrum $\pbn$, the distribution with moments $\langle\Tr(|\rho_{AB}^{T_B}|^{T_B})^m\rangle$.
  \item[(B$'$) One-sided edge bound] There is a constant $K$ with $\xi_{\min}(|\rho_{AB}^{T_B}|^{T_B})\ge -K/D$ with probability $\to1$. This is implied by (B) via the concentration of $\xi_{\min}$ (SM), and is all that the operational equality requires.
\end{description}

\begin{theorem}\label{thm:main}
Let $\rho_{AB}$ be a random induced mixed state drawn from a Haar random pure state on $\mathcal{H}_A \otimes \mathcal{H}_B \otimes \mathcal{H}_C$ with dimension $d_Ad_Bd_C$. As $d_A,d_B,d_C\to\infty$, with probability tending to one,
\begin{enumerate}
    \item[(i)] (PPT phase, $d_C>4d_Ad_B$) $\rho_{AB}^{T_B}\succeq0$, the binegativity equals $\rho_{AB}$, and $E_{ppt}=\mathcal{E}=0$.
        \item[(ii)] (ME phase, $d_{A/B}>d_{B/A}d_C$) $E_{ppt}=\mathcal{E}=\log d_{B/A}\,(1+o(1))$.
    \item[(iii)](ES phase, otherwise) Under \emph{(B$'$)}, the gap $\log Z-\mathcal{E}=O(1)$ throughout the phase. Deep in the phase ($d_Ad_B\gg d_C$), the negativity diverges, $\mathcal{E}=\tfrac12\log(d_Ad_B/d_C)+\log(8/3\pi)+o(1)\to\infty$, the gap is negligible, and $E_{ppt}/\mathcal{E}\to1$. Under \emph{(A)} and \emph{(B)} the gap is fixed at $\log Z-\mathcal{E}=\log\!\big[\,2\sqrt{1-\tfrac{64}{9\pi^2}}\,/\,(8/3\pi)\,\big]\approx0.22$.
\end{enumerate}

\end{theorem}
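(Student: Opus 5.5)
The plan is to reduce all three parts to the sandwich \eqref{eq:bounds}: $\mathcal{E}\le E_{ppt}\le\log Z$. In each phase I will control two quantities using known random-matrix results for the partial transpose of random induced states: $\|\rho_{AB}^{T_B}\|_1$, and the sign of $\xi_{\min}(|\rho_{AB}^{T_B}|^{T_B})$ (or a bound on its size). Throughout, write $D=d_Ad_B$ and normalize $\rho_{AB}=GG^\dagger/\Tr GG^\dagger$ with $G$ a $D\times d_C$ Ginibre matrix.

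\emph{(i) PPT phase.} For $d_C>4D$, the spectrum of $D\rho_{AB}^{T_B}$ converges to a semicircle centered at $1$ with radius $2\sqrt{D/d_C}<1$ (Aubrun). Edge universality/norm concentration then gives $\rho_{AB}^{T_B}\succeq0$ with probability tending to one. In that case $|\rho_{AB}^{T_B}|=\rho_{AB}^{T_B}$, so the binegativity is $(\rho_{AB}^{T_B})^{T_B}=\rho_{AB}\succeq0$. Hence $Z=\|\rho_{AB}^{T_B}\|_1=1$, and the sandwich collapses to $0$.

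\emph{(ii) ME phase.} Take $d_A>d_Bd_C$. Then $\rho_{AB}$ has rank $d_C$, and the purification $|\psi\rangle$ on $A\otimes BC$ has an $A$-side of dimension exceeding $d_Bd_C$. By concentration of Haar states (Page), the reduced state on $BC$ is close to maximally mixed. So $\rho_{AB}=\Tr_C|\psi\rangle\langle\psi|$ is, up to an isometry on $A$, close to the state obtained from $d_C$ orthogonal maximally entangled $B$–$A'$ states mixed uniformly. For such a state one computes $\rho^{T_B}$ explicitly: it is a flipped-operator sum, and $\|\rho^{T_B}\|_1\approx d_B$. Its absolute value yields a binegativity proportional to a projector, hence positive semidefinite. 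The steps are then: (a) show $\mathcal{E}=\log d_B(1+o(1))$ via moment/free-probability computations of the spectrum of $\rho_{AB}^{T_B}$, whose semicircle-type limit is centered so that negative eigenvalues carry trace-norm of order $d_B$; (b) show the binegativity spectrum $\pbn$ stays supported in $[0,\infty)$ with a gap, so $Z=\|\rho_{AB}^{T_B}\|_1$. Combining (a) and (b) gives $E_{ppt}=\mathcal{E}$.

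\emph{(iii) ES phase.} Here I compute the moments $\mathcal{Z}_{n,m}$ diagrammatically, using Weingarten calculus and leading-order noncrossing permutations; these are counted by the generalized Motzkin numbers. Under (A), continuing $n\to1$ along even integers gives the moments of $\pbn$, and hence its Stieltjes transform. Under (B$'$), $\xi_{\min}\ge-K/D$, so $Z\le\|\rho_{AB}^{T_B}\|_1+K$. Since $\|\rho_{AB}^{T_B}\|_1\ge1$, the gap is at most $\log(1+K)=O(1)$. Deep in the phase ($D\gg d_C$), $\rho_{AB}^{T_B}$ is a centered semicircle of radius $2/\sqrt{Dd_C}$. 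The trace norm is therefore $D\cdot\frac{8}{3\pi}\cdot\frac{1}{\sqrt{Dd_C}}$, which gives $\mathcal{E}=\frac12\log(D/d_C)+\log\frac{8}{3\pi}$. This diverges, so $O(1)/\mathcal{E}\to0$. With (B), the edge of $\pbn$ determines the exact constant. The binegativity is approximately $\frac{8}{3\pi}\sqrt{D/d_C}\,(1/D)+$ a fluctuation whose variance is $\frac{1}{Dd_C}(1-\frac{64}{9\pi^2})$, arising as the second moment of $|x|$ minus the squared mean for the semicircle. The partial transpose again makes this fluctuation semicircular, so its left edge sits at $-2\sqrt{1-64/9\pi^2}$ in these units. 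Then $Z/\|\rho_{AB}^{T_B}\|_1\to 2\sqrt{1-64/9\pi^2}/(8/3\pi)$, which gives the stated $0.22$.

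The main obstacle is (iii): justifying the $n\to1$ continuation and the edge behavior. That is exactly why they are isolated as hypotheses. Within the proof, the delicate step is showing that the Motzkin-number generating function produces a shifted semicircle whose width is set by the $|x|$-variance.
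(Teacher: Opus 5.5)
Your Part (i), your deep-phase negativity computation, and your (A)+(B) evaluation of the constant $0.22$ agree with the paper. There are, however, two genuine gaps.

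\textbf{Part (ii).} Your step (b) requires proving $|\rho_{AB}^{T_B}|^{T_B}\succeq0$ throughout the ME phase, and nothing in your plan establishes this. Moment or free-probability computations give at most the limiting binegativity distribution. Passing from its support to $\xi_{\min}\ge0$ needs exactly the kind of edge control that the theorem isolates as hypothesis (B) in the ES phase. Near $d_A\sim d_Bd_C$ even the limiting law (the MP broadening of the $\frac{1}{d_Bd_C}$ peak) is only conjectural, and positivity there is numerical. A Page-type closeness argument does not rescue this either: $|\cdot|$ is nonlinear, and $T_B$ can inflate operator-norm errors by a factor $d_B$. The missing idea is that the statement only claims leading-order equality, so the binegativity is not needed at all. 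Every state on $\mathbb{C}^{d_A}\otimes\mathbb{C}^{d_B}$ can be prepared exactly by LOCC (hence PPT-preserving) operations: prepare it locally and teleport the smaller factor. Hence $E_{ppt}\le\log\min(d_A,d_B)=\log d_B$, and together with $\mathcal{E}\ge\log d_B\,(1-o(1))$ this gives the claim unconditionally. As a side remark, $\rho_{AB}^{T_B}$ is not semicircle-like here. Its $d_B^2d_C$ nonzero eigenvalues cluster near $\pm\frac{1}{d_Bd_C}$ (a sum of $d_C$ swaps in the deep regime), which is what gives $\|\rho_{AB}^{T_B}\|_1\approx d_B$.

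\textbf{Part (iii), the $O(1)$ gap.} In (B$'$) the scale is $D=\sqrt{d_Ad_Bd_C}$, not $d_Ad_B$ as you set ``throughout.'' With the correct scale, $d_Ad_B\max(0,-\xi_{\min})\le K\sqrt{d_Ad_B/d_C}$, which diverges deep in the phase. So $Z\le\|\rho_{AB}^{T_B}\|_1+K$ does not follow, and $\|\rho_{AB}^{T_B}\|_1\ge1$ only yields $\log Z-\mathcal{E}\le\log(1+K\sqrt{d_Ad_B/d_C})$. Your reading of (B$'$) is also inconsistent with your own (B) computation, which puts the edge at $\approx-0.209/\sqrt{d_Ad_Bd_C}$. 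That is far below $-K/(d_Ad_B)$ when $d_Ad_B\gg d_C$. What you need instead is a lower bound on the trace norm at the same scale, valid throughout the phase. The spectrum of $\rho_{AB}^{T_B}$ is asymptotically a symmetric semicircle $X$ of radius $2/\sqrt{d_Ad_Bd_C}$ shifted by $\mu=1/(d_Ad_B)>0$. Since $|x+\mu|+|{-x}+\mu|\ge2|x|$, we get $\mathbb{E}|X+\mu|\ge\mathbb{E}|X|$, hence $\|\rho_{AB}^{T_B}\|_1\ge\frac{8}{3\pi}\sqrt{d_Ad_B/d_C}$. Then $Z\le(1+\tfrac{3\pi K}{8})\|\rho_{AB}^{T_B}\|_1$, and the gap is at most $\log(1+3\pi K/8)$.

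Finally, the step you call delicate is immediate: $M_m(b,c)=\sum_k\binom{m}{2k}C_kb^{m-2k}c^k$ is just the $m$-th moment of $b+\sqrt{c}\,S$, with $S$ a standard semicircle. The step that genuinely needs proof, and which you only assert, is that the dominant permutations in $S_{nm}$ number $M_m(C_{n/2},C_n-C_{n/2}^2)$. The paper obtains this from the bound $P(\tau)\le\frac32nm+2$, which uses that $\eta^{-1}\Sigma_B$ has two cycles for even $n$. It then identifies the maximizers as perfect matchings that are non-crossing in both the $\eta$ and the $\Sigma_B$ cyclic orders.
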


\noindent Parts \emph{(i)}, \emph{(ii)}, and the negativity of \emph{(iii)} are unconditional, resting on the (non-centered) semicircle and swap laws for $\rho_{AB}^{T_B}$~\cite{Aubrun2012,BanicaNechita2015,ShapourianEtAl2021} and, for \emph{(ii)}, the dimensional cost bound $E_{ppt}\le\log\min(d_A,d_B)$. The remainder constructs the binegativity spectrum underlying Theorem~\ref{thm:main}, proves the combinatorial identity it requires, and isolates the single estimate (B$'$) that would render the saturation phase fully unconditional. Technical proofs are given in the Supplemental Material (SM). There we also show that the negativity, the binegativity edge, and the gap self-average, concentrating about their means with vanishing fluctuations, so the equality holds for the typical state, not merely on average.

\textit{Negativity Spectrum.}---To induce a typical mixed state on $\mathcal{H}_A\otimes \mathcal{H}_B$, we can first take a Haar random pure state on tripartite Hilbert space $\mathcal{H}_A\otimes \mathcal{H}_B \otimes \mathcal{H}_C$
\begin{align}
    \ket{\Psi} = \sum_{i = 1}^{d_A d_B}\sum_{\alpha = 1}^{d_C} X_{i\alpha} \ket{\Psi^{(i)}_{AB}}\otimes \ket{\Psi^{(\alpha)}_C},
\end{align}
where $X_{i\alpha}$ are complex Gaussian random variables with joint probability distribution
\begin{align}
    P(\{ X_{i\alpha}\})\propto e^{-d_Ad_Bd_C\Tr(XX^{\dagger})}.
\end{align}
Tracing over $C$, we arrive at a random mixed state represented by density matrix $\rho_{AB} ={XX^{\dagger}}$.\footnote{To be a valid state, we normalize by $\Tr(XX^{\dagger})$ which induces the Hilbert-Schmidt measure on density matrices \cite{ZyczkowskiSommers2001}. The normalization is a global positive rescaling that preserves the sign of every binegativity eigenvalue, hence the entire phase structure and the binegativity-positivity question are normalization-independent, and shifts $\mathcal{E}$ by only $O(1/D)$, with $D=\sqrt{d_Ad_Bd_C}$. We make this rigorous in the SM.}

The moments of $\rho_{AB}^{T_B}$ are computed via sums over Wick contractions which can be organized as a sum over the permutation group $S_n$
\begin{align}
    \langle \Tr(\rho_{AB}^{T_B})^n\rangle = \frac{1}{(d_A d_Bd_C)^n}\sum_{\tau \in S_n} d_A^{\cyc(\eta^{-1} \tau)}d_B^{\cyc(\eta \tau)}d_C^{\cyc( \tau)},
\end{align}
where $\eta$ is the cyclic permutation and $\cyc(\cdot)$ counts the number of cycles in the permutation. Depending on the relative sizes of the subsystems, different permutations dominate the sum. As shown in \cite{Aubrun2012,BanicaNechita2015,banica2013asymptotic,ShapourianEtAl2021}, there are three distinct phases.
\begin{enumerate}
    \item Positive Partial Transpose phase ($d_C > 4d_Ad_B$). The bath ($\mathcal{H}_C$) is sufficiently large, such that there is no quantum entanglement present. The spectrum of $\rho_{AB}^{T_B}$ is a semicircle
    \begin{align}
        p(\xi) = \frac{d_Ad_Bd_C}{2\pi}\sqrt{\frac{4}{d_Ad_Bd_C} - \left( \xi- \frac{1}{d_Ad_B}\right)^2}.
        \label{eq:semicircle}
    \end{align}
    \item Maximally Entangled phase ($d_A > d_Bd_C$ or $d_B > d_Ad_C$). One party dominates the others. Taking $d_A > d_B d_C$ (the case $A \leftrightarrow B$ is similar), the negativity saturates to its maximal value $\log d_B$. The nonzero eigenvalues form two Marchenko--Pastur (MP) clusters. The positive cluster is
    \begin{align}
        p_+(\xi) &= \frac{d_Bd_C\sqrt{(\xi_+-\xi)(\xi-\xi_-)}}{2\pi q\,\,\xi},
    \end{align}
    with edges $\xi_\pm = (d_Bd_C)^{-1}(1\pm\sqrt{q})^2$ and aspect ratio $q = d_Bd_C/(2d_A)$. The negative cluster is its reflection $\xi \to -\xi$. The two clusters carry $N_\pm = \tfrac12 d_Bd_C(d_B\pm1)$ eigenvalues, and the imbalance $N_+-N_- = d_Bd_C$ enforces $\Tr\rho_{AB}^{T_B}=1$.
    \item Entanglement Saturation phase (otherwise). The spectrum is the same semicircle of \eqref{eq:semicircle}, though there now exist negative eigenvalues, so the negativity is nontrivial. Deep in the phase ($d_Ad_B\gg d_C$),
    \begin{align}
        \langle \mathcal{E}(\rho_{AB})\rangle = \frac{1}{2}\log \left( \frac{d_A d_B}{d_C}\right) + \log \left(\frac{8}{3\pi}\right).
    \end{align}
\end{enumerate}
The phase diagram is summarized in figure \ref{fig:phasediagram}.

\textit{Binegativity Spectrum.}---
We now turn to the evaluation of the spectrum of $|\rho_{AB}^{T_B}|^{T_B}$, the binegativity operator. This is made possible using a double replica trick
\begin{align}
     \mathcal{Z}_{n,m} = \Tr (((\rho_{AB}^{T_B})^{n} )^{T_B} )^m
\end{align}
taking even values of $n$ to $1$ to implement the absolute value. We can label the replicas with two indices $(j,k)$ with $0\leq j<  m$ and $0\leq k < n$.
This may also be evaluated for random mixed states using a sum of permutations
\begin{align}
     \langle \mathcal{Z}_{n,m}\rangle = \frac{1}{(d_A d_B d_C)^{nm}} \sum_{\tau \in S_{nm}} d_A^{\cyc ( \eta^{-1}\tau )} d_B^{\cyc (\Sigma_B^{-1}  \tau)} d_{C}^{\cyc (\tau)},
\end{align}
where $\Sigma_B$ acts via
\begin{equation}\label{eq:SigmaB}
  \Sigma_B(j,k) = (j,\,k-1) \quad (k \ge 1), \quad
  \Sigma_B(j,0) = (j+1,\,n-1).
\end{equation}

The mean ($m = 1$) for the binegativity is equal to that of the negativity because the partial transpose is trace preserving, $\Tr(|\rho_{AB}^{T_B}|^{T_B})= \Tr(|\rho_{AB}^{T_B}|)$. For the second moment ($m=2$), a partial-transpose Hilbert-Schmidt identity gives $\Tr((|\rho_{AB}^{T_B}|^{T_B})^2) = \Tr((\rho_{AB}^{T_B})^2) = \Tr(\rho_{AB}^2)$, whose ensemble average is $\langle\Tr(\rho_{AB}^2)\rangle = \tfrac{1}{d_Ad_B} + \tfrac{1}{d_C}$ at leading order in every phase.

In the PPT phase, $|\rho_{AB}^{T_B}| = {\rho_{AB}^{T_B}}$ by definition. Thus, $|\rho_{AB}^{T_B}|^{T_B} = \rho_{AB}$, which is manifestly positive.
The binegativity spectrum is that of $\rho_{AB}$, which is a Wishart random matrix and thus asymptotically follows an MP distribution
\begin{align}
\begin{aligned}
    \pbn(\xi) = \frac{d_Ad_B\sqrt{(\xi_+-\xi)(\xi -\xi_-)}}{2\pi q\xi}, \\ \xi_{\pm} = (d_Ad_B)^{-1}(1\pm\sqrt{q})^2,
\end{aligned}
\end{align}
where $q = \tfrac{d_Ad_B}{d_C}$.
Thus $Z(\rho_{AB}) = 1$ and
\begin{align}
    \langle E_{ppt}(\rho_{AB} )\rangle = \langle \mathcal{E}(\rho_{AB})\rangle = 0.
\end{align}

For the maximally entangled phase, first consider $d_A \gg d_B  d_C$. In this case, $\tau = \eta$ dominates the sum. For even $n$, $\Sigma_B^{-1} \eta$ has two cycles, so
\begin{align}
    \langle \mathcal{Z}_{n,m}\rangle = d_B^{2-nm}d_C^{1-nm},
\end{align}
which leads to a spectrum of $d_B^2 d_C$ eigenvalues at $\tfrac{1}{d_Bd_C} $ with the rest at zero. When $d_A$ becomes of order $d_B  d_C$, the sharp peak of eigenvalues at $\tfrac{1}{d_Bd_C} $ broadens while the zero eigenvalues are robust. We conjecture that the broadening is a MP distribution. The aspect ratio is then fixed by the first two moments to $q = \tfrac{d_B d_C}{2d_A}$. An MP distribution then predicts skewness $\sqrt{q}$ and kurtosis $ q+2$, which we check numerically at $q= \tfrac{1}{4}$
\begin{center}\small
\begin{tabular}{lccccc}
\toprule
$d_B^2 d_C$ & 64 & 128 & 512 & 1024 & 4096 \\
\hline
skewness  & 0.585 & 0.576 & 0.515 & 0.511 & 0.492 \\
kurtosis  & 2.498 & 2.467 & 2.307 & 2.294 & 2.244 \\
\hline
\end{tabular}\\[2pt]
{\footnotesize MP predictions with skewness $\sqrt{q}=0.500$ and kurtosis $q+2=2.250$.}
\end{center}
Irrespective of the validity of this conjecture, the binegativity spectrum appears nonnegative throughout this phase numerically, and at leading order, unconditionally,
\begin{align}
     \langle E_{ppt}(\rho_{AB} )\rangle = \langle \mathcal{E}(\rho_{AB})\rangle = \log d_B\,(1+o(1)).
\end{align}

\begin{figure}
    \centering
    \includegraphics[width=1\linewidth]{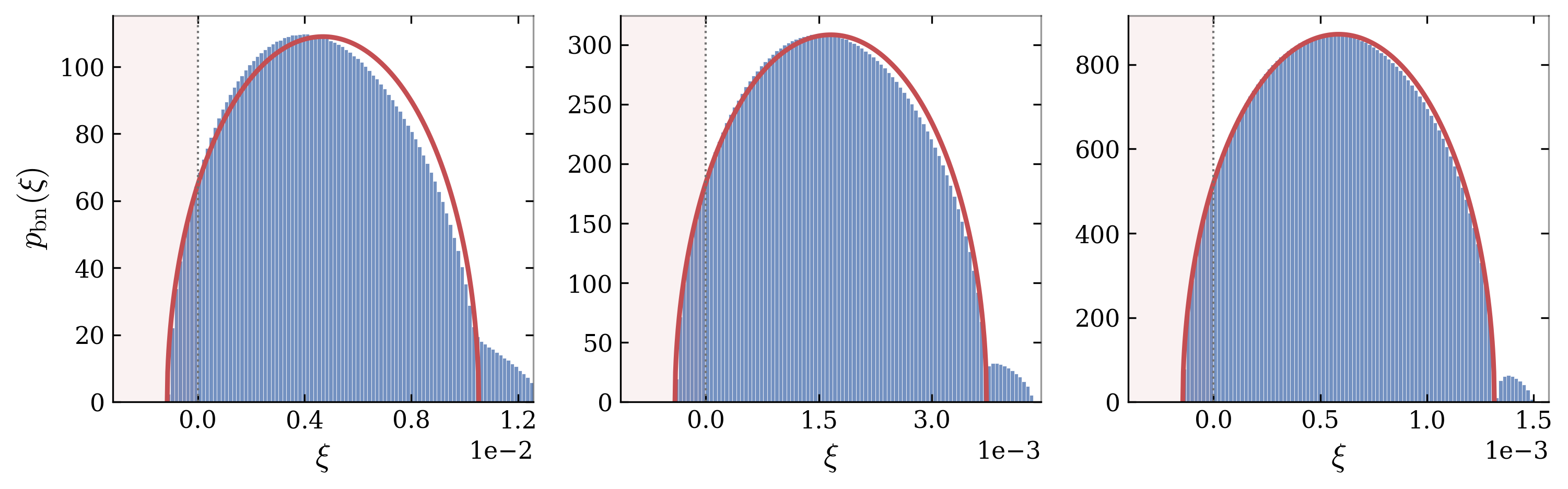}
    \caption{Entanglement-saturation-phase binegativity spectrum $\pbn(\xi)$ at $d_A=d_B=d_C=2^N$ for $N=5,6,7$ (left to right, with $5000$, $1500$, and $300$ Haar samples). The red curve is the semicircle law \eqref{eq:binegsemi}, with no fit parameters. The shaded tail to the left of $\xi=0$ is the negative support responsible for the $O(1)$ gap $\log Z - \mathcal{E}$.}
    \label{fig:semicircle}
\end{figure}
When the three subsystems are of comparable size, deep inside the entanglement saturation phase, we seek permutations that maximize the sum $P(\tau) \equiv {\cyc ( \eta^{-1}\tau )} +{\cyc (\Sigma_B^{-1}  \tau)} +{\cyc (\tau)}$. The minimal number of transpositions composing a permutation $\sigma$ is $|\sigma| = |\sigma^{-1}| = nm-\cyc(\sigma)$ and this is a metric on the permutation group, so obeys a triangle inequality $|\alpha \beta|\leq |\alpha|+ |\beta|$. Using this, $|\tau| + |\tau^{-1} \eta|\geq nm -1$, $|\tau| + |\Sigma_B^{-1}\tau|\geq nm-1$, and $|\eta^{-1}\tau| + |\Sigma_B^{-1} \tau| \geq nm-2$ (for even $n$, $\eta^{-1}\Sigma_B$ has two cycles), which together lead to the bound $P(\tau ) \leq \tfrac{3}{2}nm+2$. We find that not only is this bound saturated, but there is a large degeneracy of permutations, $D(n,m)$, that saturate it. For $m = 1$, these are the set of non-crossing pairings identified in \cite{Aubrun2012} and given by the Catalan number $C_{n/2} = \frac{1}{n/2+1}\binom{n}{n/2}$, and for $m = 2$, this is $C_n$. We verified this exhaustively up to $nm=12$, and prove in general (SM) that the saturating permutations are precisely the perfect matchings non-crossing in both the forward ($\eta$) and reversed-block ($\Sigma_B$) cyclic orders, yielding the closed form
\begin{align}
    D(n,m) = M_m(C_{n/2}, C_n -C_{n/2}^2),
    \label{eq:Dnm}
\end{align}
where $M_m(b,c)$ are the generalized Motzkin numbers \cite{sun2014congruences}
\begin{align}
    M_m (b, c) = \sum_{k = 0}^{\lfloor m/2 \rfloor} \begin{pmatrix}
m \\
2k
\end{pmatrix} C_k b^{m - 2k} c^k
\end{align}
The corresponding generating function is \cite{Wang2015Combinatorics}
\begin{align}
\sum_{m \ge 0} M_m (b, c) x^m = \frac{1 - bx - \sqrt{(1 - bx)^2 - 4 c x^2}}{2 c x^2}.
\end{align}
Combinatorially, this counts the non-crossing pairings such that each block of $m$ either closes on itself (giving $C_{n/2}$) or is paired (in a non-crossing fashion) with one other block (giving $C_n - C_{n/2}^2$). For $m=1$ these reproduce the moments of a shifted semicircle law. Continuing $n\to1$, the mean is $C_{1/2} = 8/(3\pi)$ and the variance is $\sigma^2 = 1-C_{1/2}^2 = 1-64/(9\pi^2)$, so that, with $D = \sqrt{d_Ad_Bd_C}$,
\begin{align}
    \pbn(\xi) = \frac{D}{2\pi\sigma^2}
    \sqrt{4\sigma^2 -\left(D\xi-\tfrac{8}{3\pi}\right)^2}.
\label{eq:binegsemi}
\end{align}
The semicircle shape now follows from this count under the replica continuation (A), and its mean and variance are independently rigorous, being the negativity and the Hilbert--Schmidt second moment. We numerically verify that the spectrum converges to this semicircle in the large dimension limit (see figure \ref{fig:semicircle}).

The left edge of the semicircle is at $D\xi_{\min}
=
\frac{8}{3\pi}
-
2\sqrt{1-\frac{64}{9\pi^2}}
\approx -0.209$. Thus, the binegativity develops a small negative tail. On universality grounds, we expect the tail to be controlled by Tracy-Widom fluctuations in the GUE symmetry class \cite{TracyWidom1994}. These fluctuations are subleading in the large dimension limit, which we have numerically verified, so at leading order, we find
\begin{align}
    \langle Z(\rho_{AB})\rangle =2\sqrt{1-\frac{64}{9\pi^2}}\;\sqrt{\frac{d_Ad_B}{d_C}}.
\end{align}
The gap between $\log Z(\rho_{AB})$ and $\mathcal{E}(\rho_{AB})$ is therefore $O(1)$ and so the relative correction vanishes, $E_{ppt}/\mathcal{E} \rightarrow 1$, and equality holds at leading order. Across the entire phase diagram, the negativity thus equals the PPT entanglement cost, exactly where the binegativity is positive, and up to a vanishing relative correction in the entanglement saturation phase.

\textit{Phase Boundaries.}---There are two qualitatively different entanglement phase transitions. As one crosses from the ES phase to the ME phase, $\rho_{AB}^{T_B}$ becomes rank deficient, so the phase transition is sharp and the left edge of the binegativity spectrum jumps discontinuously from a negative value to zero. The transition from the PPT phase to the ES phase, by contrast, is smooth. When $d_C = 4 d_A d_B$, the semicircle \eqref{eq:semicircle} gains support on the negative real axis, so the negativity becomes nonzero. We find numerically that the threshold ratio $d_C/(d_Ad_B)$ at which the binegativity left edge $D\xi_{\min}$ crosses zero slowly approaches a number less than $4$ that we optimistically conjecture is given by $C_{1/2}=8/(3\pi)$ as $d_A,d_B\to\infty$, so the binegativity-violation threshold is $d_C \approx \tfrac{8}{3\pi}\,d_Ad_B$. The table below shows the measured ratio $d_C/(d_Ad_B)$ at fixed $d_B=4$
\begin{center}\small
\begin{tabular}{lccccc}
\toprule
$d_A$ & $128$ & $256$ & $512$ & $1024$ & $2048$ \\
\hline
$d_C/(d_Ad_B)$ & 0.791 & 0.808 & 0.819 & 0.826 & 0.830 \\
\hline
\end{tabular}\\[2pt]
{\footnotesize Conjectured asymptotic value $C_{1/2}=8/(3\pi)\approx0.849$.}
\end{center}
There is a finite window in which $\rho_{AB}^{T_B}$ has negative eigenvalues yet $|\rho_{AB}^{T_B}|^{T_B}$ remains positive, so $E_{ppt} = \mathcal{E}$ holds exactly even though the entanglement is nonzero and $O(1)$.

\textit{Discussion.}---In this Letter, we have demonstrated that for large random induced mixed states, the logarithmic negativity equals the exact PPT entanglement cost. In the PPT phase both quantities vanish. In the ME phase $E_{ppt} = \mathcal{E} = \log d_B\,(1+o(1))$. Deep in the ES phase ($d_Ad_B\gg d_C$), the $O(1)$ gap $\log Z - \mathcal{E} \approx 0.22$ (under (A),(B)) is negligible compared to the $\tfrac{1}{2}\log(d_Ad_B/d_C)$ negativity, so $E_{ppt}/\mathcal{E} \to 1$. The PPT phase is rigorous and exact; the ME phase is rigorous to leading order. The saturation phase rests on the spectral hypotheses (A) and (B), whose qualitative content is captured by the single edge bound (B$'$). These are natural hypotheses and have strong numerical support. Together, these establish negativity as an operationally sharp entanglement measure for generic quantum states.

The logarithmic negativity is not only efficiently computable but experimentally accessible. The entanglement transition of pseudo-random mixed states has been observed on a superconducting quantum processor~\cite{LiuEtAl2023}, and the partial-transpose moments needed to locate the phase diagram are accessible through randomized measurements on quantum simulators~\cite{CarrascoEtAl2024}. The binegativity moments are similarly accessible, and numerical analytic-continuation methods developed for the randomized-measurement toolbox~\cite{VijayEtAl2025} can be applied to the replica index $\mathcal{Z}_{n,m}$. The operational equivalence we establish gives these experiments a precise resource-theoretic meaning.

Finally, we note that the technical tools underlying our calculation, namely a sum over $S_{nm}$ with three cycle-count weights, are not special to Haar random induced mixed states. The same structure governs random tensor network states~\cite{HaydenNezami2016,DongQiWalter2021,KudlerFlamNarovlanskyRyu2022} and fixed-area states in AdS/CFT~\cite{AkersRath2019,DongHarlowMarolf2019}, where holographic negativity is computed by analogous replica methods and our analysis extends straightforwardly. It also controls entanglement in random quantum circuits, including measurement-induced phase transitions~\cite{SangEtAl2021,WeinsteinBaoAltman2022}. For models of evaporating black holes, similar permutation considerations yield the same three phases in the negativity spectrum~\cite{DongMcBrideWeng2022}, where we likewise expect $E_{ppt} = \mathcal{E}$. Whether this equality persists for more general holographic states~\cite{DongKudlerFlamRath2025}, or for high-energy many-body pure states where additional phases appear~\cite{VardhanEtAl2022,VardhanEtAl2023}, is an important open question.

\begin{acknowledgements}
 \textit{Acknowledgments.}---We thank Pratik Rath for initial collaboration and comments. We thank Francesco Anna Mele for correspondence that motivated this work, for helpful comments, and providing a proof of \eqref{eq:Dnm}, which was only a conjecture in a previous draft. JKF is supported by the Marvin L. Goldberger Member Fund at the Institute for Advanced Study and the National Science Foundation under Grant No. PHY-2514611.
\end{acknowledgements}
\bibliographystyle{apsrev4-2}
\bibliography{main}

\clearpage
\onecolumngrid
\setcounter{secnumdepth}{3}
\setcounter{section}{0}
\setcounter{equation}{0}
\setcounter{figure}{0}
\setcounter{table}{0}
\setcounter{theorem}{0}
\setcounter{proposition}{0}
\setcounter{lemma}{0}
\setcounter{corollary}{0}
\setcounter{remark}{0}
\renewcommand{\thesection}{S\arabic{section}}
\renewcommand{\theequation}{S\arabic{equation}}
\renewcommand{\thefigure}{S\arabic{figure}}
\renewcommand{\thetable}{S\arabic{table}}
\renewcommand{\thetheorem}{S\arabic{theorem}}
\renewcommand{\theproposition}{S\arabic{proposition}}
\renewcommand{\thelemma}{S\arabic{lemma}}
\renewcommand{\thecorollary}{S\arabic{corollary}}
\renewcommand{\theremark}{S\arabic{remark}}

\begin{center}
{\large\textbf{Supplemental Material}}\\[3pt]
\end{center}
\vspace{0.5em}

This Supplemental Material provides technical details supporting the main text. We mark each result as \emph{rigorous}, \emph{conditional} (on hypotheses (A), (B), (B$'$)), or \emph{numerical}. Sections~\ref{sec:ape}--\ref{sec:pmax} are rigorous. Section~\ref{sec:motzkin} proves the dominant-permutation count. Section~\ref{sec:edge} treats the spectral edge and a numerical self-averaging remark. Throughout, $D=\sqrt{\dA\dB\dC}$ and $\Chalf=8/(3\pi)$ is the half-integer Catalan number.

\section{The operational reduction}\label{sec:ape}

Let $\rho_{AB}$ be a state on $\mathcal{H}_A\otimes\mathcal{H}_B$ with dimension $\dA\dB$, and let $E_{ppt}$ denote the exact entanglement cost under PPT-preserving operations. Audenaert, Plenio and Eisert \cite{APE2003} proved
\begin{equation}\label{eq:apebound}
  \mathcal{E}(\rho_{AB})\ \le\ E_{ppt}(\rho_{AB})\ \le\ \log Z(\rho_{AB}),
  \qquad
  Z(\rho_{AB}) = \|\rhoTB\|_1 + \dA\dB\,\max\!\big(0,\,-\xi_{\min}(\bineg)\big),
\end{equation}
where $\mathcal{E}=\log\|\rhoTB\|_1$ and $\bineg$ is the binegativity operator \cite{Ishizaka2004}. The lower bound is the negativity, and the upper bound exceeds it only through the negative part of the binegativity spectrum.

\begin{corollary}[Operational reduction]\label{cor:reduction}
If $\bineg\succeq0$ then $Z=\|\rhoTB\|_1$ and $E_{ppt}(\rho_{AB})=\mathcal{E}(\rho_{AB})$ exactly. More generally, if $-\xi_{\min}(\bineg)\le K/D$ for a constant $K$, then throughout the saturation phase, where $\|\rhoTB\|_1\ge\Chalf\sqrt{\dA\dB/\dC}$ (Sec.~\ref{sec:negativity}),
\begin{equation}
  0\ \le\ \log Z-\mathcal{E}\ \le\ \log\!\Big(1+\tfrac{K}{\Chalf}\Big),
\end{equation}
so the gap is $O(1)$.
\end{corollary}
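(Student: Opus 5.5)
The plan is to plug into the Audenaert--Plenio--Eisert sandwich \eqref{eq:apebound} and control the single extra term in $Z$. For the first claim, assume $\bineg\succeq0$. Then $\xi_{\min}(\bineg)\ge0$, so $\max(0,-\xi_{\min})=0$ and $Z=\|\rhoTB\|_1$. Hence $\log Z=\mathcal{E}$, and the two sides of \eqref{eq:apebound} coincide, which forces $E_{ppt}=\mathcal{E}$ exactly. Nothing beyond the cited bound is used here.

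For the quantitative statement, assume $-\xi_{\min}(\bineg)\le K/D$ with $D=\sqrt{\dA\dB\dC}$. Then
\[
\dA\dB\max(0,-\xi_{\min})\le \dA\dB\,\frac{K}{D}=K\sqrt{\frac{\dA\dB}{\dC}} .
\]
Since the extra term is nonnegative, $Z\ge\|\rhoTB\|_1$, which gives the lower bound $\log Z-\mathcal{E}\ge0$. For the upper bound I would write
\[
\log Z-\mathcal{E}=\log\Big(1+\frac{\dA\dB\max(0,-\xi_{\min})}{\|\rhoTB\|_1}\Big)\le\log\Big(1+\frac{K\sqrt{\dA\dB/\dC}}{\|\rhoTB\|_1}\Big).
\]
Inserting the saturation-phase lower bound $\|\rhoTB\|_1\ge\Chalf\sqrt{\dA\dB/\dC}$ from Sec.~\ref{sec:negativity} cancels the dimensional factor and leaves $\log(1+K/\Chalf)$. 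Monotonicity of $\log$ justifies each inequality.

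The only nontrivial input is the trace-norm lower bound $\|\rhoTB\|_1\ge\Chalf\sqrt{\dA\dB/\dC}$ throughout the phase, which is quoted from a later section. I would take it as given. The step most likely to need care is whether that bound holds deterministically or only with high probability, and uniformly across the phase. Deep in the phase it follows from the semicircle law: the trace norm of a semicircle of radius $2/D$ centered at $1/(\dA\dB)$ is asymptotically $D\cdot\frac{1}{D}\cdot\frac{8}{3\pi}\dA\dB/D=\Chalf\sqrt{\dA\dB/\dC}$. Near the PPT boundary the trace norm is at least $\Tr\rhoTB=1$, so a slightly weakened constant may be needed there.

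If the bound is only probabilistic, I would state the corollary on the high-probability event, intersected with the event from (B$'$). The conclusion then holds with probability tending to one, which is all Theorem~\ref{thm:main}(iii) needs.
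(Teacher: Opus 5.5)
Your proposal is correct and matches the paper's proof. The first claim is read off from \eqref{eq:apebound}, and the second chains $\dA\dB\max(0,-\xi_{\min})\le K\sqrt{\dA\dB/\dC}\le(K/\Chalf)\|\rhoTB\|_1$ to get $Z\le(1+K/\Chalf)\|\rhoTB\|_1$, exactly as you do. Your hedge near the PPT boundary is unnecessary: for $\dC/(\dA\dB)\ge\Chalf^2\approx0.72$ one has $\Chalf\sqrt{\dA\dB/\dC}\le 1=\Tr\rhoTB\le\|\rhoTB\|_1$, so the quoted lower bound holds there trivially, with no loss in the constant.
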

\begin{proof}
The first claim is immediate. For the second, $\dA\dB(-\xi_{\min})\le K\sqrt{\dA\dB/\dC}\le(K/\Chalf)\|\rhoTB\|_1$, using the stated lower bound, so $Z\le(1+K/\Chalf)\|\rhoTB\|_1$. Taking logs, the gap is $O(1)$; in the deep-ES regime $\dA\dB/\dC\to\infty$ where $\mathcal{E}\to\infty$, the relative gap vanishes and $E_{ppt}/\mathcal{E}\to1$.
\end{proof}

Thus the operational question reduces entirely to the spectrum of $\bineg$, and in particular to $\xi_{\min}(\bineg)$.

\section{Replica permutation sums and normalization}\label{sec:perm}

A Haar-random pure state on $\mathcal{H}_A\otimes\mathcal{H}_B\otimes\mathcal{H}_C$ gives $\rho_{AB}=XX^\dagger/\Tr(XX^\dagger)$ with $X\in\mathbb{C}^{\dA\dB\times\dC}$ i.i.d.\ complex Gaussian. Wick contraction of the unnormalized $W=XX^\dagger$ gives, for the negativity moments, the single-replica sum
\begin{equation}\label{eq:single}
  \big\langle\Tr(\rhoTB)^n\big\rangle = \frac{1}{(\dA\dB\dC)^n}\sum_{\tau\in S_n}\dA^{\cyc(\eta^{-1}\tau)}\dB^{\cyc(\eta\tau)}\dC^{\cyc(\tau)},
\end{equation}
with $\eta$ the full $n$-cycle cyclic permutation, and for the binegativity the double-replica moment $\mathcal{Z}_{n,m}=\Tr(((\rhoTB)^{n})^{T_B})^{m}$,
\begin{equation}\label{eq:masterS}
  \big\langle\mathcal{Z}_{n,m}\big\rangle = \frac{1}{(\dA\dB\dC)^{nm}}\sum_{\tau\in S_{nm}}\dA^{\cyc(\eta^{-1}\tau)}\dB^{\cyc(\Sigma_B^{-1}\tau)}\dC^{\cyc(\tau)},
  \qquad
  \Sigma_B(j,k)=(j,k-1)\ (k\ge1),\ \ \Sigma_B(j,0)=(j+1,n-1),
\end{equation}
with $\eta$ the full $nm$-cycle. In index notation $(r,s)$, this should be understood as $(r\mod m, s \mod n)$. For $m=1,2$ one has $\Sigma_B=\eta^{-1}$ (the $m=2$ coincidence is the Hilbert--Schmidt identity, Lemma~\ref{lem:HS}). We validated \eqref{eq:masterS} against direct Monte-Carlo Gaussian averages at $(n,m)=(2,2),(3,2),(2,3),(4,1)$.

\paragraph{Normalization (rigorous).} The division by $\Tr W$ is harmless, as follows.
\begin{lemma}\label{lem:norm}
$\langle\Tr W\rangle=1$ and $\Pr(|\Tr W-1|>t)\le 2e^{-cD^2\min(t,t^2)}$ for an absolute $c>0$, so $\Tr W=1+O(D^{-1}\sqrt{\log D})$ with probability $\to1$. Writing $\rho_{AB}=W/\Tr W$, \emph{(i)} every eigenvalue satisfies $\xi_i(\rho_{AB})=\xi_i(W)/\Tr W$ with $\Tr W>0$, so the sign of each binegativity eigenvalue, and hence the phase boundaries and the binegativity-positivity question, is exactly independent of the normalization. \emph{(ii)} $Z(\rho_{AB})=Z(W)/\Tr W$ and $\|\rho_{AB}^{T_B}\|_1=\|W^{T_B}\|_1/\Tr W$, so the gap $\log Z-\mathcal{E}$ is exactly normalization-free, while $\mathcal{E}(\rho_{AB})=\mathcal{E}(W)-\log\Tr W=\mathcal{E}(W)+O(D^{-1}\sqrt{\log D})$.
\end{lemma}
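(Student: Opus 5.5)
The plan is to reduce everything to the scalar $\Tr W=\|X\|_F^2$, a sum of $N=\dA\dB\dC=D^2$ independent terms $|X_{i\alpha}|^2$. Under the density $\propto e^{-D^2\Tr(XX^\dagger)}$, each $|X_{i\alpha}|^2$ is exponential with mean $1/D^2$. Hence $\langle\Tr W\rangle=N\cdot D^{-2}=1$, which gives the first claim.

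For the tail bound, I will write $D^2\Tr W$ as a sum of $N$ i.i.d.\ $\mathrm{Exp}(1)$ variables, i.e.\ a $\mathrm{Gamma}(N,1)$ variable. Then I apply Bernstein's inequality for sub-exponential variables, or equivalently the standard Chernoff bound for the Gamma distribution. This gives
\[
\Pr\big(|D^2\Tr W-N|>Nt\big)\le 2e^{-cN\min(t,t^2)},
\]
which is the stated bound since $N=D^2$. The explicit Chernoff computation uses $\log\langle e^{\lambda(E-1)}\rangle=-\lambda-\log(1-\lambda)\le\lambda^2$ for $|\lambda|\le1/2$, then optimizes over $\lambda$. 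Choosing $t=C D^{-1}\sqrt{\log D}$ with $C$ large puts us in the $t^2$ regime, so the failure probability is $2D^{-cC^2}\to0$. This gives $\Tr W=1+O(D^{-1}\sqrt{\log D})$ with probability $\to1$.

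The structural claims (i) and (ii) follow from positive homogeneity of the operations involved, once we note that $\Tr W>0$ almost surely (since $X\neq0$ a.s.). Write $s=\Tr W>0$. The maps
\[
M\mapsto M^{T_B},\qquad M\mapsto |M|,\qquad M\mapsto|M^{T_B}|^{T_B}
\]
are positively homogeneous of degree one, because partial transpose is linear and $|sM|=s|M|$ for $s>0$. Consequently $\bineg$ for $\rho_{AB}$ equals $s^{-1}$ times the binegativity of $W$, so its eigenvalues scale by $s^{-1}$ with their signs preserved. The same holds for the eigenvalues of $\rho_{AB}$ and $\rho_{AB}^{T_B}$. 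This proves (i).

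For (ii), the trace norm is homogeneous, so $\|\rho_{AB}^{T_B}\|_1=\|W^{T_B}\|_1/s$. Likewise $\max(0,-\xi_{\min})$ is homogeneous, so $Z(\rho_{AB})=Z(W)/s$, where $Z$ is extended to unnormalized $W$ by the same formula. Taking logarithms, $\log s$ cancels in $\log Z-\mathcal{E}$, and $\mathcal{E}(\rho_{AB})=\mathcal{E}(W)-\log s$. Combining this with $\log s=O(|s-1|)=O(D^{-1}\sqrt{\log D})$ on the high-probability event finishes the proof.

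The only step requiring care is the concentration constant, and that is routine sub-exponential concentration. I expect no real obstacle; the main point is to note that all the relevant maps are positively 1-homogeneous.
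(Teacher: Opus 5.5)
Your proposal is correct and follows essentially the same route as the paper. You write $D^2\Tr W$ as a sum of $D^2$ i.i.d.\ unit-mean exponentials and apply Bernstein/Chernoff concentration; your Chernoff computation is more explicit than the paper's citation of Bernstein and yields a concrete constant. You then obtain (i) and (ii) from the positive $1$-homogeneity of the partial transpose, $|\cdot|$, the trace norm and $\max(0,-\xi_{\min})$ under the rescaling by $\Tr W>0$.
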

\begin{proof}
$\Tr W=\sum_{a,b=1}^{D^2}|X_{ab}|^2= D^{-2}\sum_{a,b} E_{ab}$ where $D^2=\dA\dB\dC$ is the number of entries and $E_{ab}\equiv D^2|X_{ab}|^2$ are i.i.d.\ unit-mean exponentials, so $\langle\Tr W\rangle=1$. Bernstein's inequality for the sub-exponential $E_a-1$ gives the tail (\cite{vershynin2018high} Theorem 2.8.1), with $\Tr W=1+O(D^{-1}\sqrt{\log D})$ with high probability. For the rest, $|cM|^{T_B}=c\,|M|^{T_B}$ and $\|cM\|_1=c\|M\|_1$ for $c=1/\Tr W>0$, and $\mathcal{E}=\log\|\cdot\|_1$, $\log Z$ each pick up the same $-\log\Tr W$.
\end{proof}
\noindent Thus the normalization is irrelevant to every sign-dependent statement (phases, positivity, $\xi_{\min}\gtrless0$) and to the gap, and contributes only a vanishing $O(1/D)$ to the negativity. We may therefore work with $W$ at leading order.

\section{The negativity}\label{sec:negativity}

The negativity is the rigorous lower bound of \eqref{eq:apebound}, resting on the (non-centered) semicircle law for $\rhoTB$ \cite{Aubrun2012,BanicaNechita2015,FukudaSniady2013}. With $c=\dC/(\dA\dB)$, the rescaled eigenvalues of $\rhoTB$ converge to a semicircle of mean $1/(\dA\dB)$ and radius $2/D$. It reaches the negative axis once $2/\sqrt c>1$, fixing the PPT/NPT threshold at $c=4$ exactly. Integrating $|\lambda|$ over the shifted semicircle gives, for $0<c\le4$,
\begin{equation}\label{eq:neg-closed}
  \|\rhoTB\|_1=\frac{2}{\pi}\arcsin\frac{\sqrt c}{2}+\frac{\sqrt{c(4-c)}}{2\pi}+\frac{(4-c)^{3/2}}{3\pi\sqrt c},
\end{equation}
equal to $1$ for $c\ge4$. Deep in saturation ($c\ll4$) the mean shift is negligible against the radius $R=2/D$, and with $\langle|\lambda|\rangle=\tfrac{4}{\pi R^2}\int_0^R\lambda\sqrt{R^2-\lambda^2}\,d\lambda=\tfrac{4R}{3\pi}$ and $\dA\dB$ eigenvalues,
\begin{equation}\label{eq:neg-deep}
  \|\rhoTB\|_1=\dA\dB\cdot\frac{4R}{3\pi}=\frac{8}{3\pi}\sqrt{\frac{\dA\dB}{\dC}},
  \qquad
  \mathcal{E}=\frac12\log\frac{\dA\dB}{\dC}+\log\frac{8}{3\pi},
\end{equation}
the constant being $\log\Chalf$ (Eq.~(8) of the main text), the same $\Chalf=8/(3\pi)$ that centers the binegativity semicircle.

\section{PPT phase: exact Marchenko--Pastur}\label{sec:ppt}

\begin{proposition}[Part (i)]\label{prop:ppt}
For $\dC>4\dA\dB$, with probability $\to1$ one has $\rhoTB\succeq0$, hence $\bineg=\rho_{AB}\succeq0$, and $E_{ppt}=\mathcal{E}=0$.
\end{proposition}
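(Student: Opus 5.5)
The plan is to reduce positivity of $\rhoTB$ to a lower bound on its smallest eigenvalue, and to get that lower bound from known edge results for partially transposed Wishart matrices. By Lemma~\ref{lem:norm}, the normalization $1/\Tr W$ is positive, so it is enough to show $W^{T_B}\succeq0$ with probability $\to1$, where $W=XX^\dagger$. The unconditional input is the non-centered semicircle law of Sec.~\ref{sec:negativity}. The rescaled spectrum of $W^{T_B}$ converges to a semicircle centered at $1/(\dA\dB)$ with radius $2/D$. Its left edge is at $\frac{1}{\dA\dB}\big(1-2/\sqrt{c}\big)$ with $c=\dC/(\dA\dB)>4$, so it sits strictly to the right of zero, at distance a constant multiple of $1/(\dA\dB)$.

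Convergence of the empirical measure alone does not control $\xi_{\min}$. The key step is therefore \emph{strong} convergence, i.e.\ no outlier eigenvalues below the limiting edge. For this I would invoke the norm-convergence results for partial transposes of Wishart matrices (Aubrun's moment method in \cite{Aubrun2012}, with the Banica--Nechita/Fukuda--\'Sniady extension). The approach is the standard high-moment estimate. Set $Y=\dA\dB\,W^{T_B}-1$, the centered and rescaled operator. Bound $\langle\Tr Y^{2p}\rangle$ for $p\sim\log(\dA\dB)$ by $\dA\dB\,(2/\sqrt c+\epsilon)^{2p}$, using the permutation sum \eqref{eq:single}. The subleading (non-geodesic) permutations are suppressed by powers of $1/\min(\dA,\dB,\dC)$, and this suppression survives $p$ growing logarithmically. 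Markov's inequality then gives $\|Y\|\le 2/\sqrt c+\epsilon$ with probability $\to1$, hence $\xi_{\min}(W^{T_B})\ge\frac{1}{\dA\dB}(1-2/\sqrt c-\epsilon)>0$. Controlling the combinatorial growth of the subleading terms in $p$ is the main obstacle. I would cite the existing theorem rather than redo it, noting that it requires the regime in which all dimensions grow, or at least $\dA\dB\to\infty$ with $c$ fixed above $4$. For $c\to\infty$ the bound only improves, and there a direct perturbation argument suffices: $\|W-\mathbb{1}/(\dA\dB)\|$ is small by the Marchenko--Pastur edge, and $\|M^{T_B}\|\le \dB\|M\|$ is too crude, so the moment bound remains the tool.

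Once $\rhoTB\succeq0$, the remaining steps are algebraic. We have $|\rhoTB|=\rhoTB$, and applying $T_B$ again gives $\bineg=\rho_{AB}\succeq0$. Corollary~\ref{cor:reduction} then yields $Z=\|\rhoTB\|_1=\Tr\rhoTB=\Tr\rho_{AB}=1$. The bounds \eqref{eq:apebound} give $0=\mathcal{E}\le E_{ppt}\le\log 1=0$. Finally, the binegativity spectrum is that of the Wishart $\rho_{AB}$, which is Marchenko--Pastur as stated in the main text.
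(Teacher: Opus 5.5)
Your proposal is correct and follows the paper's proof. Both cite Aubrun's edge result that $\xi_{\min}(\rhoTB)\ge0$ with high probability once $\dC/(\dA\dB)>4$, and then run the same chain $|\rhoTB|=\rhoTB$, $\bineg=\rho_{AB}$, $Z=\|\rhoTB\|_1=1$, $0=\mathcal{E}\le E_{ppt}\le\log Z=0$; your reduction to $W^{T_B}$ via Lemma~\ref{lem:norm} and your high-moment sketch of the cited edge result are harmless additions, though your aside about a perturbative argument for $c\to\infty$ contradicts itself and should be dropped.
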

\begin{proof}
By \cite{Aubrun2012} the least eigenvalue of $\rhoTB$ is nonnegative with high probability once $c>4$. Then $|\rhoTB|=\rhoTB$, so $\bineg=(\rhoTB)^{T_B}=\rho_{AB}\succeq0$, a Marchenko--Pastur law with $q=\dA\dB/\dC$. Positivity gives $Z=\|\rhoTB\|_1=1$, $\mathcal{E}=0$, and Corollary~\ref{cor:reduction} gives $E_{ppt}=\mathcal{E}=0$.
\end{proof}

\section{Maximally entangled phase: the swap limit and a dimensional bound}\label{sec:maxent}

\begin{proposition}[Part (ii), leading order]\label{prop:maxent}
For $\dA\gg\dB\dC$, $E_{ppt}=\mathcal{E}=\log\dB\,(1+o(1))$.
\end{proposition}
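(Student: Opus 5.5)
The argument is a sandwich: a sharp lower bound on $\mathcal{E}$ from the swap-limit spectrum of $\rhoTB$, and an upper bound on $E_{ppt}$ from dimension counting alone. No information about the binegativity is needed.

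\emph{Upper bound.} I use the dimensional cost bound $E_{ppt}(\rho_{AB})\le\log\min(\dA,\dB)=\log\dB$. Any state on $\mathcal{H}_A\otimes\mathcal{H}_B$ with $\dB\le\dA$ is obtained exactly from a maximally entangled state of Schmidt rank $\dB$, i.e.\ $\log\dB$ ebits. The preparation is teleportation-based: Alice prepares a purification-free copy of $\rho_{AB}$ locally and teleports the $B$ share, which is LOCC and hence PPT-preserving. The one-shot exact PPT cost is therefore at most $\log\dB$, and so is its regularization. Combined with \eqref{eq:apebound} this gives
\[
\mathcal{E}\le E_{ppt}\le\log\dB ,
\]
so it suffices to show $\mathcal{E}\ge\log\dB\,(1-o(1))$ with probability $\to1$.

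\emph{Lower bound.} I use the swap-law results of \cite{BanicaNechita2015,ShapourianEtAl2021}, which are the regime-appropriate replacement for the semicircle of Sec.~\ref{sec:negativity}. For $\dA\gg\dB\dC$, the permutation $\tau=\eta$ dominates \eqref{eq:single}, giving
\[
\langle\Tr(\rhoTB)^n\rangle\simeq \dB^{\,2-n}\dC^{\,1-n}\quad(n\ \text{even}),\qquad
\langle\Tr(\rhoTB)^n\rangle\simeq \dB^{\,1-n}\dC^{\,1-n}\quad(n\ \text{odd}).
\]
These are the moments of $\sim\tfrac12\dB\dC(\dB\pm1)$ eigenvalues concentrated at $\pm1/(\dB\dC)$. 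The rank of $\rhoTB$ is at most $\dB^2\dC$ regardless of $\dA$, which I see by writing $\rho_{AB}=\sum_\alpha|x_\alpha\rangle\langle x_\alpha|$ and noting that $\rho_{AB}^{T_B}$ is supported on an operator space of dimension $\le\dB^2\dC$ on the $B$-side. From this I would derive
\[
\|\rhoTB\|_1\ge (1-o(1))\,\dB .
\]
The cleanest route avoids the full spectrum:
\begin{itemize}
\item By H\"older, $\|M\|_1\ge \|M\|_2^2/\|M\|_\infty$.
\item The Hilbert–Schmidt identity of Lemma~\ref{lem:HS} gives $\|\rhoTB\|_2^2=\Tr\rho_{AB}^2=\tfrac1{\dA\dB}+\tfrac1{\dC}\approx\tfrac1{\dC}$, which concentrates.
\item It remains to show $\|\rhoTB\|_\infty\le(1+o(1))/(\dB\dC)$ with high probability.
\end{itemize}
Combining the three items yields $\|\rhoTB\|_1\ge(1-o(1))\dB$, hence $\mathcal{E}\ge\log\dB-o(1)$, which is more than enough.

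\emph{Main obstacle.} The hard step is the operator-norm estimate. My plan is to compare $\dB\dC\,\rhoTB$ with the partial transpose of a normalized Wishart in the regime $\dA\to\infty$. In that regime $W=XX^\dagger$, viewed as a map $\mathbb{C}^{\dB\dC}\to\mathcal{H}_A$, has Gram matrix $X^\dagger X\to\mathbb{1}/(\dB\dC)$ with fluctuations $O(\sqrt{\dB\dC/\dA})$, a standard Marchenko–Pastur edge bound. Then $\rho_{AB}^{T_B}$ is unitarily related to $\tfrac1{\dB\dC}(F_{BB'}\otimes\mathbb{1}_C)$ plus a perturbation of norm $O(\sqrt{\dB\dC/\dA}/(\dB\dC))$, where $F$ is the swap on two copies of $B$; this is the ``swap law''. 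Since $\|F\|_\infty=1$, the bound follows.

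If the perturbative identification proves delicate, I would fall back on a moment method. Using the even moments above with $n\sim\log(\dB\dC)$ gives $\|\rhoTB\|_\infty\le(1+o(1))/(\dB\dC)$, provided subleading permutations are controlled by $(\dB\dC/\dA)$-suppressed factors; that control holds since $\dA\gg\dB\dC$. Probability $\to1$ follows from Markov's inequality on the high moment. Finally, the normalization $\Tr W$ only shifts $\mathcal{E}$ by $O(D^{-1}\sqrt{\log D})$ by Lemma~\ref{lem:norm}.
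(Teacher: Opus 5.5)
Your proposal is correct. It uses the same sandwich as the paper's proof: the dimensional bound $E_{ppt}\le\log\min(\dA,\dB)=\log\dB$ from above, justified by your standard teleportation argument, and $E_{ppt}\ge\mathcal{E}\ge\log\dB-o(1)$ from below.

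The difference is how you get the lower bound on $\|\rhoTB\|_1$. The paper reads it off the full swap limit. The $\dB\dC$ vectors $x^{(b,c)}\in\mathbb{C}^{\dA}$ become orthonormal, so on its support $\rhoTB\to\tfrac{1}{\dB\dC}\mathbb{S}$, with eigenvalues $\pm\tfrac{1}{\dB\dC}$ of multiplicities $\tfrac12\dB\dC(\dB\pm1)$. The passage from this limit to convergence of the trace norm is not quantified.

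You instead use $\|M\|_1\ge\|M\|_2^2/\|M\|_\infty$, which needs only one-sided information, and your factorization makes both ingredients clean. With $Y:\mathcal{H}_{B'}\otimes\mathcal{H}_C\to\mathcal{H}_A$ the reshaped state, the identity $\rhoTB=(Y\otimes\mathbb{I}_B)(F_{B'B}\otimes\mathbb{I}_C)(Y\otimes\mathbb{I}_B)^\dagger$ holds exactly. Hence $\|\rhoTB\|_\infty\le\|Y\|^2=\|\rho_A\|_\infty$, which reduces to a single Wishart top-edge bound $(1+\sqrt{\dB\dC/\dA}+o(1))^2/(\dB\dC)$. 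The numerator needs no concentration argument at all: $\Tr\rho_{AB}^2=\Tr\rho_C^2\ge 1/\dC$ holds deterministically by Cauchy--Schwarz on the $\dC$-dimensional $\rho_C$.

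This gives the stronger additive statement $0\le E_{ppt}-\mathcal{E}\le 2\log(1+\sqrt{\dB\dC/\dA})+o(1)$ from standard inputs. It also makes your rank remark and moment-method fallback unnecessary; in the fallback, control of the subleading permutations at $n\sim\log(\dB\dC)$ is asserted rather than shown.

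What the paper's route buys in exchange is the full leading-order spectrum of $\rhoTB$, which it uses to argue $\bineg\to\tfrac{1}{\dB\dC}\mathbb{I}_{\rm supp}\succeq0$. Your argument gives no information about the binegativity, but the proposition does not need it.
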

\begin{proof}
Index $X_{(ab),c}$ and let $x^{(b,c)}\in\mathbb{C}^{\dA}$ have entries $x^{(b,c)}_a=X_{(ab),c}$, giving $\dB\dC$ i.i.d.\ complex Gaussian vectors. The partial transpose decomposes into $\dB\times\dB$ blocks $[\rhoTB]^{(b,b')}=Z^{-1}\sum_c x^{(b',c)}(x^{(b,c)})^\dagger$. As $\dA\to\infty$ the vectors become orthonormal and $Z=\Tr(XX^\dagger)\to\dA\dB\dC$, so on its support $\rhoTB\to\tfrac{1}{\dB\dC}\mathbb{S}$, the swap on $\mathbb{C}^{\dB}\otimes\mathbb{C}^{\dB}$. Since $\mathbb{S}$ has eigenvalues $\pm1$, $|\rhoTB|\to\tfrac{1}{\dB\dC}\Pi$ ($\Pi$ the support projector) and
\begin{equation}\label{eq:swap-spectrum}
  \bineg\to\tfrac{1}{\dB\dC}\Pi^{T_B}=\tfrac{1}{\dB\dC}\,\mathbb{I}_{\rm supp}\succeq0.
\end{equation}
The leading-order $\rhoTB$ spectrum is $\pm\tfrac1{\dB\dC}$ with multiplicities $\dC\tfrac{\dB(\dB\pm1)}{2}$ and a zero of multiplicity $\dB(\dA-\dB\dC)$, confirmed to the integer (e.g.\ $(\dA,\dB,\dC)=(256,2,2)$ gives $N_+=6,N_-=2,N_0=504$). Taking $|\cdot|$ and $T_B$ gives \eqref{eq:swap-spectrum}, so $\bineg\succeq0$ at leading order and $\mathcal{E}=\log\|\rhoTB\|_1\to\log\dB$.

\emph{Rigorous equality via a dimensional bound.} The upper bound on need not use the binegativity at all. $E_{ppt}\leq \log\min(\dA,\dB) = d_B$ for all states (here $\dA>\dB\dC\ge\dB$). With the rigorous lower bound $E_{ppt}\ge\mathcal{E}=\log\dB(1-o(1))$,
\begin{equation}
  \log\dB(1-o(1))\ \le\ E_{ppt}\ \le\ \log\dB,
\end{equation}
so $E_{ppt}=\log\dB(1-o(1))$ \emph{unconditionally}, no edge control of the binegativity is required in this phase.
\end{proof}
\noindent The broadening of the $\tfrac1{\dB\dC}$ peak at finite $\dA/\dB\dC$ is numerically Marchenko--Pastur with $q_{\rm peak}=\dB\dC/(2\dA)$ (fixed by the first two moments). This fine structure does not affect the operational conclusion. Numerically the binegativity is nonnegative throughout the phase, so the equality appears to hold exactly.

\section{Second moment and the maximal cycle count}\label{sec:pmax}

\begin{lemma}[Partial-transpose Hilbert--Schmidt identity]\label{lem:HS}
For any $A,B$ on $\mathcal{H}_A\otimes\mathcal{H}_B$, $\Tr(A^{T_B}B^{T_B})=\Tr(AB)$.
\end{lemma}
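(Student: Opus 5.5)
The plan is to verify the identity at the level of matrix elements, working in product bases $\{\ket{a}\}$ of $\mathcal{H}_A$ and $\{\ket{b}\}$ of $\mathcal{H}_B$. Since both sides are bilinear in $(A,B)$, it is enough to check the identity on elementary operators $A=\ket{a}\!\bra{a'}\otimes\ket{b}\!\bra{b'}$ and $B=\ket{\alpha}\!\bra{\alpha'}\otimes\ket{\beta}\!\bra{\beta'}$, and then extend by linearity.

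For these operators the partial transpose acts as $A^{T_B}=\ket{a}\!\bra{a'}\otimes\ket{b'}\!\bra{b}$, and likewise for $B$. The trace factorizes over the two tensor factors. The $\mathcal{H}_A$ factor is $\braket{a'|\alpha}\braket{\alpha'|a}$ on both sides. On $\mathcal{H}_B$, the right-hand side gives $\Tr(\ket{b}\!\bra{b'}\ket{\beta}\!\bra{\beta'})=\delta_{b'\beta}\delta_{\beta' b}$. The left-hand side gives $\Tr(\ket{b'}\!\bra{b}\ket{\beta'}\!\bra{\beta})=\delta_{b\beta'}\delta_{\beta b'}$. These two products of Kronecker deltas are identical, which proves the identity.

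An equivalent route is to write $\Tr(XY)=\sum X_{ij}Y_{ji}$ in the product basis. The partial transpose only swaps the $B$-indices of each factor, and these swaps are undone by relabeling the $B$ summation indices.

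The only point that needs care is that the full transpose satisfies $\Tr(A^TB^T)=\Tr((BA)^T)=\Tr(BA)$, whereas the partial transpose is not anti-multiplicative. So the argument must stay at the level of indices (or use $\Tr(X^{T_B})=\Tr X$ together with the elementwise computation above) rather than invoke an algebraic transposition rule.

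Once the lemma is established, taking $A=B=|\rhoTB|$ and using $|M|^2=M^2$ for Hermitian $M$ gives $\Tr((\bineg)^2)=\Tr((\rhoTB)^2)$. A second application, with $A=B=\rho_{AB}$, gives $\Tr((\rhoTB)^2)=\Tr(\rho_{AB}^2)$. This is the second-moment statement used in the main text.
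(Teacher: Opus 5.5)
Your proposal is correct and takes essentially the same approach as the paper. The paper's proof is the direct index computation in a product basis: it uses $(A^{T_B})_{(i,j),(k,l)}=A_{(i,l),(k,j)}$ and then relabels the $B$ summation indices, which is your ``equivalent route''; your elementary-tensor check via bilinearity is the same calculation organized differently.
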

\begin{proof}
$(A^{T_B})_{(i,j),(k,l)}=A_{(i,l),(k,j)}$, so $\Tr(A^{T_B}B^{T_B})=\sum_{i,j,k,l}A_{(i,l),(k,j)}B_{(k,j),(i,l)}=\Tr(AB)$.
\end{proof}

\begin{theorem}[Second moment]\label{thm:alpha2}
For the unnormalized $W=XX^\dagger$, $\big\langle\Tr(\bineg)^2\big\rangle=\dfrac{1}{\dA\dB}+\dfrac{1}{\dC}$ exactly. For the normalized state $\rho_{AB}=W/\Tr W$, the exact value is the Lubkin formula $(\dA\dB+\dC)/(\dA\dB\dC+1)$~\cite{ZyczkowskiSommers2001}, which agrees to leading order.
\end{theorem}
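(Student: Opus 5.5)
Since $\Tr(\bineg)^2$ is a Hilbert--Schmidt inner product of partial transposes, I will first strip away the absolute value and the outer transpose, and only then average. Applying Lemma~\ref{lem:HS} with both operators equal to $|W^{T_B}|$ gives
\begin{equation}
\Tr\big(|W^{T_B}|^{T_B}\big)^2=\Tr\,|W^{T_B}|^2 .
\end{equation}
Because $W^{T_B}$ is Hermitian, $|W^{T_B}|^2=(W^{T_B})^2$. A second application of Lemma~\ref{lem:HS}, now with both operators equal to $W$, reduces this to $\Tr(W^{T_B})^2=\Tr W^2$. This identity holds sample by sample, with no randomness used, and it is the same $m=2$ coincidence $\Sigma_B=\eta^{-1}$ noted after \eqref{eq:masterS}.

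Next I will compute $\langle\Tr W^2\rangle$ exactly by Wick's theorem. With $P\propto e^{-D^2\Tr XX^\dagger}$ the covariance is $\langle X_{i\alpha}\bar X_{j\beta}\rangle=\delta_{ij}\delta_{\alpha\beta}/D^2$. Write
\begin{equation}
\Tr W^2=\sum_{i,j,\alpha,\beta}X_{i\alpha}\bar X_{j\alpha}X_{j\beta}\bar X_{i\beta}.
\end{equation}
There are exactly two pairings, and Gaussian moments have no higher corrections.
\begin{itemize}
\item The pairing $(X_{i\alpha},\bar X_{j\alpha})(X_{j\beta},\bar X_{i\beta})$ gives $\sum\delta_{ij}/D^4=\dA\dB\,\dC^2/D^4$.
\item The pairing $(X_{i\alpha},\bar X_{i\beta})(X_{j\beta},\bar X_{j\alpha})$ gives $\sum\delta_{\alpha\beta}/D^4=(\dA\dB)^2\dC/D^4$.
\end{itemize}
Since $D^4=(\dA\dB\dC)^2$, the sum is $1/(\dA\dB)+1/\dC$. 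As a check, this matches \eqref{eq:masterS} at $n=1,m=2$: the two permutations in $S_2$ give $\dA^2\dB^2\dC+\dA\dB\dC^2$, divided by $(\dA\dB\dC)^2$.

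For the normalized state, $\rho_{AB}=W/\Tr W$, so the same identity gives $\Tr(\bineg)^2=\Tr\rho_{AB}^2$. Its Hilbert--Schmidt-measure average is Lubkin's $(\dA\dB+\dC)/(\dA\dB\dC+1)$, which I will cite from~\cite{ZyczkowskiSommers2001}. To derive it directly instead, I would use the independence of $\Tr W$ (a Gamma variable) from $W/\Tr W$ for Gaussian $X$. This gives $\langle\Tr W^2\rangle=\langle(\Tr W)^2\rangle\langle\Tr\rho^2\rangle$ with $\langle(\Tr W)^2\rangle=(D^2+1)/D^2$. Dividing then yields
\begin{equation}
\frac{\dA\dB+\dC}{D^2}\cdot\frac{D^2}{D^2+1}=\frac{\dA\dB+\dC}{\dA\dB\dC+1},
\end{equation}
which equals the unnormalized result up to a factor $1+O(D^{-2})$.

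None of these steps is a real obstacle. The one point needing care is the claim $|M|^2=M^2$, which requires Hermiticity of $W^{T_B}$; this holds because the partial transpose preserves Hermiticity.
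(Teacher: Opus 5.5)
Your proposal is correct and follows the paper's proof. The Hilbert--Schmidt identity (Lemma~\ref{lem:HS}) together with Hermiticity of the partial transpose reduces the binegativity second moment to the purity sample by sample, and your two-term Wick sum giving $1/(\dA\dB)+1/\dC$ is exactly what the paper reads off from the $n=2$ case of \eqref{eq:single}. Your derivation of the Lubkin formula, from the independence of the Gamma-distributed $\Tr W$ and $W/\Tr W$ with $\langle(\Tr W)^2\rangle=(D^2+1)/D^2$, is a correct addition that makes the normalized statement self-contained where the paper only cites~\cite{ZyczkowskiSommers2001}.
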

\begin{proof}
By Lemma~\ref{lem:HS} with $A=B=|\rhoTB|$, $\Tr((\bineg)^2)=\Tr(|\rhoTB|^2)=\Tr((\rhoTB)^2)$. For $W$, the $n=2,m=1$ case of \eqref{eq:single} gives $1/(\dA\dB)+1/\dC$ exactly. For $\rho_{AB}=W/\Tr W$ the same HS identity reduces the second moment to the purity $\langle\Tr(\rho_{AB}^2)\rangle$, whose exact value is the Lubkin formula \cite{ZyczkowskiSommers2001}.
\end{proof}

\begin{proposition}[Maximal cycle count]\label{prop:pmax}
For every even $n$, $m$, and $\tau\in S_{nm}$, with $P(\tau)=\cyc(\tau)+\cyc(\eta^{-1}\tau)+\cyc(\Sigma_B^{-1}\tau)$, one has $P(\tau)\le\tfrac32nm+2$, and the bound is attained.
\end{proposition}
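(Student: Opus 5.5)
The bound comes from summing three triangle inequalities for the Cayley metric $|\sigma| = nm - \cyc(\sigma)$ on $S_{nm}$. Attainment is then shown by an explicit $\tau$. Write $N=nm$. The triangle inequality $|\alpha\beta|\le|\alpha|+|\beta|$ and the symmetry $|\sigma|=|\sigma^{-1}|$ give three inequalities:
\begin{equation}
|\tau|+|\tau^{-1}\eta|\ \ge\ |\eta| = N-1, \qquad |\tau|+|\tau^{-1}\Sigma_B|\ \ge\ |\Sigma_B|, \qquad |\eta^{-1}\tau|+|\tau^{-1}\Sigma_B|\ \ge\ |\eta^{-1}\Sigma_B|.
\end{equation}
Since $\cyc(\sigma)=\cyc(\sigma^{-1})$, we have $|\tau^{-1}\eta|=|\eta^{-1}\tau|$ and $|\tau^{-1}\Sigma_B|=|\Sigma_B^{-1}\tau|$. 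Adding the three inequalities therefore gives
\begin{equation}
2\big(3N - P(\tau)\big)\ \ge\ |\eta|+|\Sigma_B|+|\eta^{-1}\Sigma_B|,
\end{equation}
so the whole problem reduces to computing the cycle structures of $\Sigma_B$ and $\eta^{-1}\Sigma_B$.

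By \eqref{eq:SigmaB}, $\Sigma_B$ runs down $k=n-1,\dots,0$ inside block $j$ and then jumps to block $j+1$. It therefore visits all $N$ replicas in a single orbit, so $\Sigma_B$ is an $N$-cycle and $|\Sigma_B|=N-1$. For $\eta^{-1}\Sigma_B$, I fix the convention for $\eta$ as the forward cycle $(j,k)\mapsto(j,k+1)$, $(j,n-1)\mapsto(j+1,0)$. Then for $k\ge1$ we get $\eta^{-1}\Sigma_B(j,k)=(j,k-2)$ (reading indices mod $n$ within the block), and at the block boundary $\eta^{-1}\Sigma_B(j,0)=(j+1,n-1)$ shifted once more, with a similar rule for $k=1$. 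I expect the orbits to respect the parity of $k$. For even $n$ this yields exactly two cycles, one on even-$k$ and one on odd-$k$ positions, each threading through all $m$ blocks, so $|\eta^{-1}\Sigma_B|=N-2$. Checking this carefully, including that each parity class is a single orbit across blocks and how the convention for $\eta$ interacts with the boundary step, is the main bookkeeping obstacle. The main text already asserts the two-cycle fact, so I would verify it by direct tracing.

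Substituting these values gives $2(3N-P)\ge 3N-4$, that is, $P(\tau)\le\tfrac32N+2$.

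For attainment I would take $\tau$ to be the pairing of each $(j,k)$ with $(j,k')$ given by a fixed non-crossing perfect matching inside each block, repeated in every block. For example, $\tau$ pairs $(j,2i)$ with $(j,2i+1)$, which uses that $n$ is even. Then $\cyc(\tau)=N/2$. The permutation $\eta^{-1}\tau$ is the $m=1$ Aubrun case applied blockwise plus a gluing: a non-crossing pairing composed with a cycle gives $N/2+1$ cycles, since a non-crossing pairing lies on a geodesic from $\mathrm{id}$ to $\eta$. The permutation $\Sigma_B^{-1}\tau$ likewise gives $N/2+1$, because the same matching is non-crossing in the reversed-block order as well. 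Summing, $P(\tau)=\tfrac32N+2$.

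Equivalently, attainment only requires $\tau$ to lie simultaneously on geodesics $\mathrm{id}\to\eta$ and $\mathrm{id}\to\Sigma_B$ with $\tau^{2}=\mathrm{id}$ of type $2^{N/2}$, so that all three inequalities are equalities. I would confirm this by computing the two cycle counts explicitly for the adjacent-pair matching, which is a short check. As a sanity check, the case $m=2$ gives $C_n$ saturating permutations, consistent with \eqref{eq:Dnm}.
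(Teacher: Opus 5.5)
Your proposal is correct and follows the paper's proof: you sum the same three triangle inequalities, using $|\Sigma_B|=N-1$ and $|\eta^{-1}\Sigma_B|=N-2$ from the even/odd two-cycle structure. One bookkeeping fix for when you trace that structure: the boundary rules are $(j,1)\mapsto(j-1,n-1)$ and $(j,0)\mapsto(j+1,n-2)$, so the within-block mod-$n$ reading you suggest for $k\ge1$ fails at $k=1$; your conclusion that there are two cycles, each threading all $m$ blocks, is nonetheless right.

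For attainment, the paper defers to the bi-noncrossing matchings of Sec.~\ref{sec:motzkin} via the geodesic lemma, and your adjacent-pair matching is one such matching. A direct trace confirms $\cyc(\eta^{-1}\tau)=\cyc(\Sigma_B^{-1}\tau)=N/2+1$: in each product one parity class is fixed pointwise and the other forms a single cycle of length $N/2$.
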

\begin{proof}
With $|\sigma|=nm-\cyc(\sigma)$ (a metric), $|\eta|=|\Sigma_B|=nm-1$, and $|\tau|+|\eta^{-1}\tau|\ge nm-1$, $|\tau|+|\Sigma_B^{-1}\tau|\ge nm-1$, $|\eta^{-1}\tau|+|\Sigma_B^{-1}\tau|\ge|\eta^{-1}\Sigma_B|$. For even $n$, $\eta^{-1}\Sigma_B$ has two cycles (even/odd positions), so $|\eta^{-1}\Sigma_B|=nm-2$. Summing, $2(|\tau|+|\eta^{-1}\tau|+|\Sigma_B^{-1}\tau|)\ge3nm-4$, i.e.\ $P(\tau)\le\tfrac32nm+2$. The attainment of the bound is the subject of the following section.
\end{proof}

\section{The generalized Motzkin count}\label{sec:motzkin}

We prove the dominant-permutation count of the main text. Fix even $n\ge2$, $m\ge1$, $N=nm$, $X=\ZZ_m\times\{0,\dots,n-1\}$ with blocks $B_j$, where $\eta,\Sigma\equiv\Sigma_B$ read each block forward/backward, $P(\tau)=\cyc(\tau)+\cyc(\eta^{-1}\tau)+\cyc(\Sigma^{-1}\tau)$, $\Cat_r=\binom{2r}{r}/(r+1)$, $D(n,m)=\#\{\tau:P(\tau)=\tfrac32N+2\}$.

\begin{lemma}[Refinement \cite{Biane1996,Biane1997}]\label{lem:refine}
If $|\alpha|+|\alpha^{-1}\beta|=|\beta|$, i.e. $\alpha$ is on a geodesic from the identity permutation to $\beta$, then every cycle of $\alpha$ lies in a cycle of $\beta$.
\end{lemma}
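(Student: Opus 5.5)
The plan is to reduce the statement to the elementary fact about how a single transposition changes the cycle structure of a permutation, and then to read a geodesic from the identity to $\beta$ as a sequence of cycle merges.

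\emph{Step 1 (one transposition).} For $\sigma\in S_N$ and a transposition $t=(a\,b)$, I will check directly from the cycle notation how $\sigma t$ relates to $\sigma$. If $a$ and $b$ lie in different cycles of $\sigma$, then $\sigma t$ has those two cycles joined into one and all other cycles unchanged, so $\cyc(\sigma t)=\cyc(\sigma)-1$. If $a$ and $b$ lie in the same cycle, that cycle splits in two, so $\cyc(\sigma t)=\cyc(\sigma)+1$. In particular $|\sigma t|=|\sigma|\pm1$. In the $-1$ case, the partition of $\{1,\dots,N\}$ into cycle supports of $\sigma t$ is exactly the partition of $\sigma$ with two blocks united. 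Hence it is a coarsening of the partition of $\sigma$.

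\emph{Step 2 (concatenating minimal words).} Write $k=|\alpha|$ and $l=|\alpha^{-1}\beta|$. Choose minimal factorizations $\alpha=t_1\cdots t_k$ and $\alpha^{-1}\beta=s_1\cdots s_l$ into transpositions; these exist by the definition of $|\cdot|$ as the minimal number of transpositions. Then $\beta=t_1\cdots t_k s_1\cdots s_l$ is a word of length $k+l=|\beta|$. Consider the partial products $\pi_0=e,\ \pi_1=t_1,\dots,\pi_k=\alpha,\dots,\pi_{k+l}=\beta$. By Step 1 each step changes $|\cdot|$ by exactly $\pm1$. Since the walk starts at $|\pi_0|=0$ and ends at $|\pi_{k+l}|=k+l$ after $k+l$ steps, every step must increase $|\cdot|$ by one. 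Equivalently, every step decreases $\cyc$ by one, so every step is a merge.

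\emph{Step 3 (conclusion).} By the partition statement in Step 1, the cycle-support partitions of $\pi_0,\pi_1,\dots,\pi_{k+l}$ form a chain of successive coarsenings. The partition of $\pi_k=\alpha$ therefore refines that of $\pi_{k+l}=\beta$, which is precisely the claim that every cycle of $\alpha$ lies inside a cycle of $\beta$.

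I expect no step to be a real obstacle. The only place needing care is Step 1, where one must fix the multiplication convention (right multiplication by $t$, matching how $\alpha^{-1}\beta$ is appended to $\alpha$) and verify the merge/split dichotomy by writing the cycle through $a$ and $b$ explicitly. The result is classical (Biane), so this verification is routine.
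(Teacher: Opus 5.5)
Your proof is correct. The paper gives no proof of its own and simply cites Biane, and your argument (concatenate minimal transposition factorizations of $\alpha$ and $\alpha^{-1}\beta$, then note that a length-$|\beta|$ word from the identity to $\beta$ must consist entirely of cycle merges) is the standard argument behind that citation, so it is complete given the identity $|\sigma|=N-\cyc(\sigma)$, which the paper already takes as known.
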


\begin{lemma}[Geodesic $\Leftrightarrow$ non-crossing \cite{Biane1997, NicaSpeicher2006}]\label{lem:geo}
For an $N$-cycle $\omega$ and a pairing $\mu$, $|\mu|+|\omega^{-1}\mu|=N-1$ if and only if $\mu$ is non-crossing in the cyclic order $\omega$.
\end{lemma}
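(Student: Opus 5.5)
Since $\mu$ is a pairing, $|\mu| = N/2$. The condition is therefore equivalent to $|\omega^{-1}\mu| = N/2-1$, i.e.\ $\cyc(\omega^{-1}\mu) = N/2+1$. The triangle inequality gives $|\mu|+|\omega^{-1}\mu|\ge|\omega|=N-1$ for every $\mu$, so always $\cyc(\omega^{-1}\mu)\le N/2+1$. I would prove both directions together by induction on $N$ (even), with a reduction step that removes an $\omega$-adjacent pair. The base case $N=2$ is immediate: the unique pairing is $\omega$ itself, $\omega^{-1}\mu=\mathrm{id}$ has $2=N/2+1$ cycles, and it is non-crossing.

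\emph{Reduction step.} Suppose $\mu$ contains a pair $\{i,\omega(i)\}$ of $\omega$-neighbours. Then $\omega^{-1}\mu(i)=\omega^{-1}\omega(i)=i$, so $i$ is a fixed point of $\omega^{-1}\mu$. Next follow the cycle through $\omega(i)$. We have $\omega^{-1}\mu(\omega(i))=\omega^{-1}(i)$. The unique preimage of $\omega(i)$ is $x=\mu(\omega^2(i))$, because $\omega^{-1}\mu(x)=\omega(i)$ forces $\mu(x)=\omega^2(i)$. Hence that cycle reads $\cdots\to x\to\omega(i)\to\omega^{-1}(i)\to\cdots$.

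Now let $\omega'$ be the $(N-2)$-cycle obtained by skipping $i$ and $\omega(i)$, and let $\mu'$ be $\mu$ restricted to the remaining points. On the remaining points $\omega'^{-1}\mu'$ agrees with $\omega^{-1}\mu$, except that $\omega'^{-1}(\omega^2(i))=\omega^{-1}(i)$. So $x\mapsto\omega^{-1}(i)$ directly, and the cycle is merely shortened by deleting $\omega(i)$. (The degenerate cases $N=4$, or $x$ coinciding with $\omega^{-1}(i)$, are checked the same way.) Therefore
\[
\cyc(\omega^{-1}\mu)=\cyc(\omega'^{-1}\mu')+1,
\]
so the equality $\cyc=N/2+1$ holds for $\mu$ if and only if $\cyc=(N-2)/2+1$ holds for $\mu'$. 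Deleting an adjacent pair neither creates nor destroys crossings among the other pairs, and the pair itself crosses nothing. Hence $\mu$ is non-crossing in $\omega$ if and only if $\mu'$ is non-crossing in $\omega'$, and the induction hypothesis closes this case.

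\emph{Remaining case.} Suppose $\mu$ has no $\omega$-adjacent pair. Then $\omega^{-1}\mu$ has no fixed points, because $\omega^{-1}\mu(j)=j$ means $\mu(j)=\omega(j)$. So every cycle has length at least $2$, $\cyc(\omega^{-1}\mu)\le N/2<N/2+1$, and equality fails. On the other side, every non-crossing pairing contains an adjacent pair: take a pair $\{a,b\}$ whose $\omega$-arc from $a$ to $b$ is shortest; any point strictly inside that arc would be paired inside it, contradicting minimality. So $\mu$ is crossing as well, and both sides of the equivalence are false.

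The main obstacle is the bookkeeping in the reduction step, namely verifying that $\cyc$ drops by exactly one rather than a cycle splitting or merging; this is the cycle-tracing computation above. Everything else is elementary.
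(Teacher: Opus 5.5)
Your proof is correct, and it takes a different route from the paper. The paper gives no argument of its own here; it cites Biane and Nica--Speicher. There the statement is proved for arbitrary permutations: $\pi\in S_N$ satisfies $|\pi|+|\omega^{-1}\pi|=N-1$ exactly when the cycles of $\pi$ form a non-crossing partition, each cycle traversed in the cyclic order of $\omega$.

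You prove only the pairing case, by an elementary induction. Splicing out an $\omega$-adjacent pair $\{i,\omega(i)\}$ deletes the fixed point $i$ of $\omega^{-1}\mu$ and shortens, without splitting or merging, the cycle through $\omega(i)$. A pairing with no adjacent pair makes $\omega^{-1}\mu$ fixed-point-free, so $\cyc\le N/2$, and it is crossing by your shortest-arc argument. The bookkeeping checks out: the only surviving point that $\omega^{-1}\mu$ sends into $\{i,\omega(i)\}$ is $x=\mu(\omega^2(i))$. Also $\omega^{-1}(i)\neq\omega(i)$ for $N\ge4$, so the shortened cycle is non-empty.

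Your version buys self-containedness. The cited general version buys applicability to non-pairings, and that is what the paper actually relies on: in Lemma~\ref{lem:dom} it invokes this lemma for $\tau$ before showing that $\tau$ is a perfect matching. With your pairing-only lemma, that proof has to be reordered. First run the refinement/parity step, which uses only $|\eta^{-1}\tau|=|\Sigma^{-1}\tau|=N/2-1$ and Lemma~\ref{lem:refine}, to conclude that $\tau$ is a matching. Only then apply the non-crossing criterion. The reordering costs nothing, since that step never uses non-crossingness.
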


\begin{lemma}[Dominant permutations are bi-noncrossing pairings]\label{lem:dom}
$P(\tau)=\tfrac32N+2$ if and only if $\tau$ is a perfect matching non-crossing in both the $\eta$ and $\Sigma$ orders.
\end{lemma}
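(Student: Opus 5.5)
The plan is to analyse when the three triangle inequalities in the proof of Proposition~\ref{prop:pmax} are simultaneously tight. I would prove the converse first, since it is short. Suppose $\tau$ is a perfect matching that is non-crossing in both the $\eta$ and $\Sigma$ cyclic orders. Then $|\tau|=N/2$. Lemma~\ref{lem:geo}, applied once with $\omega=\eta$ and once with $\omega=\Sigma$ (both are $N$-cycles), gives $|\eta^{-1}\tau|=|\Sigma^{-1}\tau|=N/2-1$. Hence $\cyc(\tau)=N/2$ and $\cyc(\eta^{-1}\tau)=\cyc(\Sigma^{-1}\tau)=N/2+1$, so $P(\tau)=\tfrac32N+2$.

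For the forward direction, assume $P(\tau)=\tfrac32N+2$.
\begin{enumerate}
\item The proof of Proposition~\ref{prop:pmax} sums three inequalities, each of the form (left side) $\ge$ (right side), and the total is tight. Therefore each inequality is an equality:
\[
|\tau|+|\eta^{-1}\tau|=N-1,\qquad |\tau|+|\Sigma^{-1}\tau|=N-1,\qquad |\eta^{-1}\tau|+|\Sigma^{-1}\tau|=N-2 .
\]
Solving this linear system gives $|\tau|=N/2$, so $\tau$ has exactly $N/2$ cycles.
\item The first equality says $\tau$ lies on a geodesic from the identity to $\eta$. By Biane's correspondence (the general form behind Lemma~\ref{lem:geo}, valid for arbitrary permutations, not only pairings), the cycles of $\tau$ form a non-crossing partition in the $\eta$ order. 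Likewise, the second equality makes the cycles of $\tau$ non-crossing in the $\Sigma$ order.
\end{enumerate}

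The key remaining step, and the one I expect to be the main obstacle, is showing that $\tau$ has no fixed points. A non-crossing partition of $N$ points into $N/2$ blocks with no singletons must consist entirely of pairs, so ruling out fixed points finishes the proof. Lemma~\ref{lem:geo} then applies directly and gives bi-noncrossingness.

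To exclude fixed points, I would use the third equality.
\begin{enumerate}
\item Since $|\Sigma^{-1}\tau|=|\tau^{-1}\Sigma|$, the third equality reads $|\eta^{-1}\tau|+|(\eta^{-1}\tau)^{-1}(\eta^{-1}\Sigma)|=|\eta^{-1}\Sigma|$. That is, $\eta^{-1}\tau$ lies on a geodesic from the identity to $\eta^{-1}\Sigma$.
\item Lemma~\ref{lem:refine} then says every cycle of $\eta^{-1}\tau$ is contained in a cycle of $\eta^{-1}\Sigma$.
\item As noted in Proposition~\ref{prop:pmax}, for even $n$ the two cycles of $\eta^{-1}\Sigma$ are the even-position and odd-position classes. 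Here positions are labelled linearly along the $\eta$ order, i.e.\ $(j,k)\mapsto jn+k$. Because $n$ is even, $N=nm$ is even, so the parity labelling is consistent around the whole cycle.
\item Consequently, $\eta^{-1}\tau$ preserves parity.
\item Now suppose $\tau(x)=x$. Then $\eta^{-1}\tau(x)=\eta^{-1}(x)$. This is the $\eta$-predecessor of $x$, whose parity is opposite to that of $x$, a contradiction.
\end{enumerate}
So $\tau$ is fixed-point free, hence a perfect matching. The only delicate point is verifying carefully, from the definition of $\Sigma_B$ in \eqref{eq:SigmaB}, that the two cycles of $\eta^{-1}\Sigma$ are exactly the parity classes in the $\eta$ labelling; I would check this with a direct computation of $\eta^{-1}\Sigma$ on $(j,k)$, separating the cases $k\ge1$ and $k=0$.
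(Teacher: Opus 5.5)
Your proof is correct and follows the paper's argument. Tightness of the three triangle inequalities gives $|\tau|=N/2$. Lemma~\ref{lem:refine}, applied to $\eta^{-1}\tau$ on a geodesic to $\eta^{-1}\Sigma$, forces $\eta^{-1}\tau$ to preserve parity, which rules out short cycles of $\tau$: you exclude fixed points, the paper argues all cycles have even length, and the two are equivalent here given $N/2$ cycles. Lemma~\ref{lem:geo} then gives bi-noncrossingness.

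One small difference: you establish that $\tau$ is a pairing before invoking Lemma~\ref{lem:geo}, which is stated only for pairings, whereas the paper invokes it first. Your ordering is slightly cleaner, and it makes your intermediate appeal to Biane's general non-crossing-partition correspondence unnecessary.
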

\begin{proof}
($\Rightarrow$) Equality forces the three inequalities of Prop.~\ref{prop:pmax} to be equalities, which solve to $|\tau|=N/2$ and $|\eta^{-1}\tau|=|\Sigma^{-1}\tau| = N/2-1$. Lemma~\ref{lem:geo} then gives that $\tau$ is non-crossing in both orders. By Lemma~\ref{lem:refine}, cycles of $\eta^{-1}\tau$ lie in cycles of $\eta^{-1}\Sigma$ because $\eta^{-1}\Sigma$ has two cycles. These two cycles are the parity classes of $N$, so $\eta^{-1}\tau$ preserves the parity of $N$. i.e.~maps evens to evens and odds to odds. Since $\eta$ reverses the parity i.e.~maps all evens to odds and vice versa, $\tau$ must reverse it. In order to reverse the parity, all cycles of $\tau$ must have even length. With $N/2$ cycles, all cycles must be length $2$.  ($\Leftarrow$) A bi-noncrossing matching has $|\tau|=N/2$ and (Lemma~\ref{lem:geo}) $|\eta^{-1}\tau|=|\Sigma^{-1}\tau|=N/2-1$, so $P(\tau)=\tfrac32N+2$.
\end{proof}

\begin{lemma}[Block graph]\label{lem:block}
For bi-noncrossing $\tau$, the block graph (vertices are $B_i$ and edges are pairs that $\tau$ connects) is a non-crossing partial (meaning not every $B_i$ must be paired) pairing on the cyclic blocks.
\end{lemma}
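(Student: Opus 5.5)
The plan is to show two separate things. First, no pair of $\tau$ joins two positions lying in blocks that are cyclically non-adjacent in a way that would cross. Second, the induced graph on blocks has at most one edge type per pair of blocks that matters. Then I show that the resulting pattern of block-connections is non-crossing in the cyclic order $B_0,B_1,\dots,B_{m-1}$ and that each block is joined to at most one other block.

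\textbf{Step 1: sets of crossing pairs.} Take two pairs $\{x,y\}$ and $\{x',y'\}$ of $\tau$. Suppose they realise block-edges $(B_i,B_k)$ and $(B_j,B_l)$ that cross in the cyclic block order, i.e.\ $i<j<k<l$ cyclically. Within each block, $\eta$ and $\Sigma$ read the positions in opposite directions. However, both orders visit the blocks themselves in the same cyclic order $B_0,B_1,\dots$. Indeed $\Sigma(j,0)=(j+1,n-1)$ moves to the next block, just as $\eta$ does.

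Hence four points lying in four distinct blocks with this interleaving are interleaved in the $\eta$ order and also in the $\Sigma$ order. This contradicts bi-noncrossingness (Lemma~\ref{lem:dom}). The same argument handles the degenerate cases where some of $i,j,k,l$ coincide but the block-edges still cross. So the block graph, read as a graph on the cyclic vertex set, has no crossing edges.

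\textbf{Step 2: each block has at most one partner block.} This is the main obstacle, since non-crossing alone would permit a block $B_i$ to connect to both $B_j$ and $B_k$ in nested fashion. Suppose $B_i$ has points $a$ paired into $B_j$ and $b$ paired into $B_k$, with $j\neq k$ and both different from $i$.

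Consider the positions $a,b\in B_i$ together with their partners $a'\in B_j$ and $b'\in B_k$. Block-level non-crossing forces the chords $a a'$ and $b b'$ not to cross in either order. In the $\eta$ order, this means the relative position of $a,b$ inside $B_i$ must be compatible with the cyclic position of $B_j$ versus $B_k$ seen from $B_i$.

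In the $\Sigma$ order the interior of $B_i$ is reversed while the block order is unchanged. The same four points therefore appear with $a,b$ swapped. Exactly one of the two orders then has the chords interleaved, which is a contradiction.

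I expect this step to need care with the junction positions $k=0$ and $k=n-1$, where $\Sigma$ leaves the block. I would handle it by writing both cyclic words explicitly as $(\text{blocks before } B_i)\,a\,b\,(\text{blocks after})$ versus $\ldots b\,a\ldots$ and checking interleaving directly.

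\textbf{Step 3: conclusion.} Each block thus has degree at most one, ignoring internal pairs, so the block graph is a partial matching. By Step 1 it is non-crossing, hence a non-crossing partial pairing on the cyclic blocks, as claimed in Lemma~\ref{lem:block}.
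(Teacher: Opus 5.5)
Your proof is correct and follows the paper's argument. Both rest on the same observation: $\eta$ and $\Sigma$ visit the blocks in the same cyclic order but traverse $B_i$ in opposite directions, so chords from $B_i$ to two distinct blocks $B_j,B_k$ interleave in exactly one of the two orders. Crossing block edges are excluded because four points in four distinct blocks interleave already in the $\eta$ order. The only differences are cosmetic: you prove non-crossing before the degree bound, and the ``degenerate cases'' remark in Step 1 is unnecessary, since Step 2 makes distinct block edges vertex-disjoint.
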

\begin{proof}
If $B_i$ joins both $B_j$ and $B_\ell$ ($B_i<B_j<B_\ell$) via endpoints $x_j,x_\ell\in B_i$, non-crossing in the $\eta$ order forces $x_\ell<_{B_i^+}x_j$, while the $\Sigma$ order (same block order, $B_i$ reversed) forces $x_\ell<_{B_i^-}x_j$, i.e.\ $x_j<_{B_i^+}x_\ell$, a contradiction. If $B_i\!-\!B_k$ and $B_j\!-\!B_\ell$ with $i<j<k<\ell$, this would be crossing in the $\eta$ order. Thus, the connections between blocks must be non-crossing.
\end{proof}

\begin{lemma}[Local decorations]\label{lem:local}
An isolated block has $\Cat_{n/2}$ internal bi-noncrossing matchings, and a connected pair has $\Cat_n-\Cat_{n/2}^2$.
\end{lemma}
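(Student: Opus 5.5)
The plan is to reduce both counts to ordinary non-crossing pairings on a single circle, using that $\Sigma$ reads each block in the reverse direction but keeps the cyclic order of the blocks. The key observation is that, for pairings supported on one block or on two blocks, reversing the internal order of each block does not change which pairs of chords cross. So on these supports, non-crossing in the $\eta$ order should be equivalent to non-crossing in the $\Sigma$ order.

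\emph{Isolated block.} Let $\tau$ restrict to a perfect matching of $B_j$, which has $n$ points. The $\eta$ and $\Sigma$ orders on $B_j$ are the same $n$-cycle read in opposite directions. Crossing of two chords on a circle is invariant under reversing orientation. Hence the bi-non-crossing matchings of $B_j$ are exactly the non-crossing perfect matchings of $n$ cyclically ordered points, and there are $\Cat_{n/2}$ of them.

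The parity requirement from Lemma~\ref{lem:dom} is automatic. A non-crossing chord encloses an even number of points, so it joins points of opposite parity. Since $n$ is even, parity within a block agrees with global parity in $X$. Lemma~\ref{lem:dom}'s converse then confirms that every such matching is dominant.

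\emph{Connected pair.} Consider $B_i\cup B_j$, which has $2n$ points. In the $\eta$ order this set is read as $B_i$ forward followed by $B_j$ forward. In the $\Sigma$ order it is $B_i$ backward followed by $B_j$ backward. I will check that every pair of chords crosses in one order if and only if it crosses in the other, by cases.
\begin{itemize}
\item \emph{Two chords inside the same block.} Both orders restrict to the same cyclic order on that block up to reversal, so crossing is unchanged.
\item \emph{One chord inside $B_i$, one inside $B_j$.} These never cross in either order.
\item \emph{A chord $(a,b)\subset B_i$ and a cross chord.} They cross iff the $B_i$-endpoint of the cross chord lies strictly between $a$ and $b$ within $B_i$. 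The set of points strictly between $a$ and $b$ is the same after reversal.
\item \emph{Two cross chords $(x,y),(x',y')$ with $x,x'\in B_i$, $y,y'\in B_j$.} In the $\eta$ order they are non-crossing iff $x<x'\Leftrightarrow y>y'$, i.e.\ they are nested. Reversing both blocks flips both inequalities, which preserves this condition.
\end{itemize}
Therefore the bi-non-crossing perfect matchings of $B_i\cup B_j$ are exactly the non-crossing perfect matchings of $2n$ cyclic points, of which there are $\Cat_n$.

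Among these we must remove the matchings in which the two blocks are not actually connected, i.e.\ those with no cross chord. A matching with no cross chord consists of independent non-crossing matchings of each block, giving $\Cat_{n/2}^2$. The connected count is therefore $\Cat_n-\Cat_{n/2}^2$. Parity is again automatic, as in the isolated case.

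The step needing the most care is the case analysis showing that crossing is invariant under block reversal, in particular the cross–cross case. It is also worth noting explicitly why this equivalence fails once a block meets two others, which is the obstruction exploited in Lemma~\ref{lem:block}. The final assembly, summing over non-crossing partial pairings of blocks, is left to the following result.
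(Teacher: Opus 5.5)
Your proposal is correct and is essentially the paper's argument: crossings are invariant under reversing the blocks, which reduces both counts to ordinary non-crossing matchings on $n$ and $2n$ cyclic points, and the $\Cat_{n/2}^2$ matchings with no cross chord are then subtracted. Your four-case check is exactly the ``case check'' the paper leaves implicit, and the parity aside is not needed for the count. As a shortcut, on two blocks the restricted $\Sigma$ order $B_i^{-}B_j^{-}$ is cyclically just the reverse of the restricted $\eta$ order $B_i^{+}B_j^{+}$ (mirroring $\Sigma_B=\eta^{-1}$ at $m=2$), so crossing-invariance follows without any cases.
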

\begin{proof}
An isolated block has $\Cat_{n/2}$ non-crossing matchings on $n$ points \cite{Stanley1999}, preserved under reversal. For two blocks, a case check shows every matching non-crossing in the forward order $a_1<\dots<a_n<b_1<\dots<b_n$ stays non-crossing after reversing each block, so the two-block bi-noncrossing matchings are the $\Cat_n$ non-crossing matchings on $2n$ points, of which $\Cat_{n/2}^2$ are disconnected.
\end{proof}

\begin{lemma}[Fixed block graph]\label{lem:fixed}
For a non-crossing partial matching on $m$ blocks with $q$ edges, the number of bi-noncrossing $\tau$ is $\Cat_{n/2}^{\,m-2q}(\Cat_n-\Cat_{n/2}^2)^q$.
\end{lemma}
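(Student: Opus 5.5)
The plan is a direct factorization. Once the block graph $G$ is fixed, the constraints on $\tau$ decouple into independent local problems, one for each vertex or edge of $G$, and Lemma~\ref{lem:local} counts each factor. The proof then has two halves: a construction map showing every local choice glues into a bi-noncrossing matching, and a restriction map showing every bi-noncrossing matching with block graph $G$ arises this way exactly once.

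\emph{Restriction.} Let $\tau$ be bi-noncrossing with block graph $G$. By Lemma~\ref{lem:block}, each block $B_j$ is either isolated in $G$ or joined to exactly one partner $B_{j'}$.
\begin{itemize}
\item If $B_j$ is isolated, $\tau$ maps $B_j$ into itself. Its restriction is a matching on $n$ points that is non-crossing in the induced forward order. This gives one of $\Cat_{n/2}$ choices.
\item If $B_j-B_{j'}$ is an edge, $\tau$ preserves $B_j\cup B_{j'}$. Restricting the $\eta$ and $\Sigma$ orders to these $2n$ points gives the forward order $a_1<\dots<a_n<b_1<\dots<b_n$ and its block-reversed version. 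So the restriction is a two-block bi-noncrossing matching with at least one cross pair, since $j$ and $j'$ are adjacent in $G$. This gives one of $\Cat_n-\Cat_{n/2}^2$ choices.
\end{itemize}
The map $\tau\mapsto(\text{local pieces})$ is injective, because $\tau$ is the disjoint union of its restrictions.

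\emph{Construction.} Conversely, take any choice of local pieces: for every isolated block a non-crossing matching of that block, and for every edge a connected two-block matching. Their union $\tau$ is a perfect matching, and its block graph is exactly $G$. I must check that $\tau$ is non-crossing in both cyclic orders.

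A crossing involves two pairs $\{x,y\}$ and $\{u,v\}$. If both pairs lie in the same local piece, they do not cross, by the local hypothesis. The restriction of a cyclic order to a subset is the induced cyclic order, and for a two-block piece the induced $\Sigma$ order is exactly the reversed-block order of Lemma~\ref{lem:local}.

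If the pairs lie in different pieces, the supports are $U=B_j$ or $B_j\cup B_{j'}$ and $V=B_k$ or $B_k\cup B_{k'}$. Because $G$ is non-crossing as a partial matching on the cyclically ordered blocks, $U$ and $V$ are unions of blocks that do not interleave in the cyclic order of blocks. Both $\eta$ and $\Sigma$ list the points block by block, with blocks in the same cyclic order and only the order inside each block reversed. Hence, in either order, $V$ lies inside a single gap between consecutive points of $U$, or on a single arc determined by the two blocks of $U$. In that configuration a chord inside $U$ cannot separate two points of $V$.

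\emph{Main obstacle.} The step needing the most care is this last interleaving claim. For a two-block piece $B_j\cup B_{j'}$, the complement in the cycle consists of two arcs: the blocks strictly between $j$ and $j'$, and the blocks beyond. A chord with one end in $B_j$ and one in $B_{j'}$ separates these arcs, so I need each other piece to sit entirely in one arc. For an edge piece $B_k-B_{k'}$, this is exactly the non-crossing condition on $G$. For an isolated block it is automatic. This conclusion does not depend on the internal orientation of the blocks, so it holds for $\Sigma$ as well as $\eta$.

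Multiplying the independent local counts over the $m-2q$ isolated blocks and the $q$ edges gives $\Cat_{n/2}^{m-2q}(\Cat_n-\Cat_{n/2}^2)^q$. Summing over non-crossing partial matchings with $q$ edges then yields the Motzkin formula of Eq.~\eqref{eq:Dnm}.
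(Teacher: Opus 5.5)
Your proposal is correct and takes the same approach as the paper. The paper's proof is the one-line observation that isolated blocks and connected pairs decorate independently, because distinct components of $G$ are separated or nested in the block cyclic order, and this holds identically in the $\eta$ and $\Sigma$ orders. Your restriction/construction bijection, together with the check that every other piece lies in a single gap between cyclically consecutive points of a given piece, is a careful expansion of that sentence.
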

\begin{proof}
The $m-2q$ isolated blocks and $q$ pairs decorate independently, since distinct components of $G$ are separated or nested in the cyclic order and never alternate in both the $\eta$ and $\Sigma$ orders.
\end{proof}

\begin{theorem}[Dominant-permutation count]\label{thm:motzkin}
For all even $n\ge2$, $m\ge1$,
\begin{equation}
  D(n,m)=M_m\big(\Cat_{n/2},\,\Cat_n-\Cat_{n/2}^2\big)
  =\sum_{q=0}^{\lfloor m/2\rfloor}\binom{m}{2q}\Cat_q\,\Cat_{n/2}^{\,m-2q}\big(\Cat_n-\Cat_{n/2}^2\big)^q .
\end{equation}
\end{theorem}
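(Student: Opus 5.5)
The count follows by assembling the lemmas already established in this section. By Lemma~\ref{lem:dom}, $D(n,m)$ equals the number of perfect matchings $\tau$ of $X$ that are non-crossing in both the $\eta$ and $\Sigma$ cyclic orders. I would partition this set according to the block graph $G(\tau)$. By Lemma~\ref{lem:block}, $G(\tau)$ is always a non-crossing partial matching on the $m$ cyclically ordered blocks $B_0,\dots,B_{m-1}$. Conversely, every bi-noncrossing $\tau$ has exactly one block graph. Writing the count as
\begin{equation*}
D(n,m)=\sum_{G}\#\{\tau:\ G(\tau)=G\},
\end{equation*}
with the sum over non-crossing partial matchings $G$ of $m$ cyclic points, Lemma~\ref{lem:fixed} evaluates each summand as $\Cat_{n/2}^{\,m-2q}(\Cat_n-\Cat_{n/2}^2)^q$, where $q$ is the number of edges of $G$. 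The local factors are those of Lemma~\ref{lem:local}.

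It remains to count the non-crossing partial matchings of $m$ points on a circle with exactly $q$ edges. The standard count is $\binom{m}{2q}\Cat_q$. First choose the $2q$ matched points. A non-crossing perfect matching on $2q$ cyclically ordered points can then be chosen in $\Cat_q$ ways. Unmatched points impose no constraint on crossings, so this choice is free. Substituting gives
\begin{equation*}
D(n,m)=\sum_{q=0}^{\lfloor m/2\rfloor}\binom{m}{2q}\Cat_q\,\Cat_{n/2}^{\,m-2q}\big(\Cat_n-\Cat_{n/2}^2\big)^q.
\end{equation*}
This is precisely the defining formula of $M_m(b,c)$ with $b=\Cat_{n/2}$ and $c=\Cat_n-\Cat_{n/2}^2$.

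The main obstacle is the surjectivity implicit in Lemma~\ref{lem:fixed}. One must check that for every non-crossing partial block matching $G$, any choice of local decorations assembles into a globally bi-noncrossing $\tau$ whose block graph is exactly $G$. Two things need verifying:
\begin{itemize}
\item A decoration of a connected pair from the set of size $\Cat_n-\Cat_{n/2}^2$ really contains at least one cross-block pair. This holds because the disconnected matchings have been subtracted.
\item Chords belonging to different components of $G$ never cross in either order.
\end{itemize}
For the second point I would argue as follows. Components occupy unions of blocks that are nested or disjoint as arcs of the block cycle. The $\Sigma$ order keeps the same cyclic order of blocks and only reverses positions within each block. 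Hence two chords from different components have endpoints in distinct blocks arranged in the same non-alternating pattern in both orders, so they cannot cross in either order. Combined with the case check of Lemma~\ref{lem:local}, this completes the bijection between bi-noncrossing matchings and decorated non-crossing partial block matchings, and hence the identity.
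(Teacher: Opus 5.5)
Your proposal is correct and follows the paper's proof. Both arguments reduce via Lemma~\ref{lem:dom} to bi-noncrossing perfect matchings, partition them by the non-crossing partial block matching of Lemma~\ref{lem:block}, apply the local counts of Lemmas~\ref{lem:local} and~\ref{lem:fixed}, and sum using the $\binom{m}{2q}C_q$ count of non-crossing partial matchings with $q$ edges; your extra check that chords from different components never alternate in either order, since $\Sigma$ keeps the block order and only reverses within blocks, is a more detailed version of the paper's one-line justification of Lemma~\ref{lem:fixed}.
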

\begin{proof}
By Lemma~\ref{lem:dom}, $D(n,m)$ counts bi-noncrossing matchings. By Lemma~\ref{lem:block} each has a non-crossing partial-matching block graph, of which there are $\binom{m}{2q}\Cat_q$ with $q$ edges \cite{Stanley1999}. Lemma~\ref{lem:fixed} gives the local count above each, and summing over $q$ yields $M_m$.
\end{proof}

\section{The spectral edge}\label{sec:edge}

The left edge of the saturation semicircle is $D\,\xi_{\min}^{\rm bulk}=\Chalf-2\sqrt{1-\Chalf^2}\approx-0.209<0$. Hypothesis (B) is that the expected minimum eigenvalue converges to it, the fluctuations being Tracy--Widom (GUE) \cite{TracyWidom1994} of order $D^{-1}(\dA\dB)^{-2/3}$, negligible against the $O(1/D)$ edge. Granting (A), (B), the mean terms cancel,
\begin{equation}
  \langle Z\rangle=\Big[\Chalf+\big(2\sqrt{1-\Chalf^2}-\Chalf\big)\Big]\sqrt{\tfrac{\dA\dB}{\dC}}=2\sqrt{1-\tfrac{64}{9\pi^2}}\sqrt{\tfrac{\dA\dB}{\dC}}\approx1.057\sqrt{\tfrac{\dA\dB}{\dC}}\ >\ \|\rhoTB\|_1\approx0.849\sqrt{\tfrac{\dA\dB}{\dC}},
\end{equation}
and the gap is $\log Z-\mathcal{E}=\log(2\sqrt{1-\Chalf^2}/\Chalf)\approx0.22$.

Concentration of measure pins $\xi_{\min}(\bineg)$ to a deterministic value at the $1/D$ scale, reducing (B$'$) to a bound on that value.

\begin{remark}[Self-averaging]\label{rem:selfavg}
The operational statements concern a typical state, so the macroscopic quantities must concentrate about their ensemble means, which they do. The trace norm $\|\rhoTB\|_1=\Tr|\rhoTB|$ is a linear spectral statistic, and for these Wishart-type ensembles such statistics have $O(1)$ variance by eigenvalue rigidity, so against the mean $\sim\dA\dB/D$ the relative fluctuation is $\sim1/(\dA\dB)$ and $\mathcal{E}=\log\|\rhoTB\|_1$ concentrates with $\mathrm{Var}(\mathcal{E})^{1/2}\sim1/(\dA\dB)$. The edge $D\,\xi_{\min}(\bineg)$ likewise concentrates, at the bulk value $\Chalf-2\sqrt{1-\Chalf^2}$ with Tracy--Widom fluctuations of order $(\dA\dB)^{-2/3}$ and no negative outliers. On the diagonal $\dA=\dB=\dC=2^N$ ($N=3,4,5$, up to $3000$ samples) we measure $\mathrm{Var}(\mathcal{E})^{1/2}\approx0.75/(\dA\dB)$ and $\mathrm{Var}(D\,\xi_{\min})^{1/2}\approx0.44\,(\dA\dB)^{-2/3}$, both vanishing, while the gap $\log Z-\mathcal{E}$ remains $O(1)$ for every sampled state (maximum $\lesssim0.24$). Hence $E_{ppt}/\mathcal{E}\to1$ for the typical state, not merely in the ensemble average.
\end{remark}

\end{document}